\newcommand{\restate}[2]{\newtheorem*{restate-#1}{\autoref{#1}}\begin{restate-#1}#2\end{restate-#1}}
\newtheorem{theorem}{Theorem}
\newtheorem{lemma}{Lemma}
\newtheorem{corollary}{Corollary}
\theoremstyle{definition}
\newtheorem{definition}{Definition}
\newtheorem*{problem*}{Problem}
\newcommand{\ind}{\operatorname{index}}
\newcommand{\lvl}{\operatorname{size}}
\newcommand{\lev}{l}
\newcommand{\cen}{c}
\newcommand{\prn}{p}
\newcommand{\T}{\mathcal{T}}
\newcommand{\C}{\mathcal{C}}
\renewcommand{\L}{\mathcal{L}}
\newcommand{\calS}{\mathcal{S}}
\newcommand{\calP}{\mathcal{P}}
\title{Maintaining Light Spanners via Minimal Updates}
\author{David Eppstein and Hadi Khodabandeh\thanks{Department of Computer Science, University of California, Irvine. Research supported in part by NSF grant CCF-2212129.}}
\date{ }
\begin{document}

\maketitle

%\nolinenumbers
%\hideLIPIcs

%%%%%%%%%%%%
% Abstract %
%%%%%%%%%%%%

\begin{abstract}
We study the problem of maintaining a lightweight bounded-degree $(1+\varepsilon)$-spanner of a dynamic point set in a $d$-dimensional Euclidean space, where $\varepsilon>0$ and $d$ are arbitrary constants. In our fully-dynamic setting, points are allowed to be inserted as well as deleted, and our objective is to maintain a  $(1+\varepsilon)$-spanner that has constant bounds on its maximum degree and its lightness (the ratio of its weight to that of the minimum spanning tree), while minimizing the recourse, which is the number of edges added or removed by each point insertion or deletion. We present a fully-dynamic algorithm that handles point insertion with amortized constant recourse and point deletion with amortized $\mathcal{O}(\log\Delta)$ recourse, where $\Delta$ is the aspect ratio of the point set.
\end{abstract}

%%%%%%%%%%%%%%%%
% Introduction %
%%%%%%%%%%%%%%%%

\section{Introduction}%
\label{sec:intro}%

Spanners are sparse subgraphs of a denser graph that approximate its shortest path distances. Extensive study has been made of \emph{geometric spanners}, for which the dense graph is a complete weighted graph on a point set in $d$-dimensional Euclidean space, and where the weight of an edge $(u,v)$ is simply the Euclidean distance between $u$ and $v$. The approximation quality of a spanner is measured by its stretch factor $t$, where a $t$-spanner $S$ is defined by the property that for every two vertices $u$ and $v$ in the graph, $d_S(u,v)\leq t\cdot d(u,v)$. Here $d$ and $d_S$ are respectively the Euclidean metric of dimension $d$ and the shortest path metric induced by the spanner. In other words, the Euclidean distances are \emph{stretched} by a factor of at most $t$ in the spanner.

In this paper, we study the problem of maintaining $1+\varepsilon$-spanners under a dynamic model in which points are inserted and removed by an adversary and our goal is to minimize the recourse, which is the number of changes we make to the edge set of the spanner. The recourse should be distinguished from the time it takes us to calculate the changes we make, which might be larger; our use of recourse instead of update time is motivated by real-world networks, where making a physical change to the network is often more costly than the actual run-time of the algorithm that decides what changes need to be made.

We introduce a hierarchical structure that we update with minimal changes after each operation. We use this hierarchy as the basis of our sparse spanner. It is worth noting that using hierarchical structures to build sparse spanners was known in prior work, but our hierarchy is designed in a way that it suits our needs in this paper. Then we turn our attention into light-weight spanners and we use novel concepts and ideas (such as the notion of stretch factor for subsets of edges and quantifying the impact of an edge update on other edges through a potential function) to iteratively lighten the weight of the spanner after point insertion and deletion. We also use well-known techniques in prior work such as bucketing and amortized analysis, which eventually lead us to our results on the amortized number of edge updates in each bucket. This was made possible through carefully crafting a potential function that decreases via our maintenance updates on the spanner.

\subsection{Related work}
Geometric $t$-spanners have numerous applications in network design problems \cite{narasimhan2007geometric}. Finding a sparse lightweight $t$-spanner is the core of many of these applications. The existence of such spanners and efficient algorithms for constructing them have been considered under different settings and constraints \cite{biswas2021massively,salzman2014sparsification,har2005fast}. In offline settings, where the point set is given as a whole to the algorithm, the prominent \emph{greedy spanner} algorithm is well known for its all-in-one quality due to its optimal performance under multiple measures including sparsity (its number of edges), lightness (the weight of the spanner divided by the weight of the minimum spanning tree), and maximum degree \cite{althofer1993sparse,borradaile2019greedy}. The output of the greedy spanner also has low crossing number in the plane and small separators and separator hierarchies in doubling metric spaces \cite{eppstein2021edge,le2022greedy}.
However, in some applications, the points of an input set may repeatedly change as a spanner for them is used, and a static network would not accurately represent their distances. The dynamic model, detailed below, deal with these types of problems.

In the dynamic model, points are inserted or removed one at a time, and the algorithm has to maintain a $t$-spanner at all times. In this setting the algorithm is allowed to remove previous edges. For $n$ points in $d$-dimensional Euclidean space, Arya, Mount, and Smid~\cite{arya1994randomized} designed a spanner construction with a linear number of edges and $\mathcal{O}(\log n)$ diameter under the assumption that a point to be deleted is chosen randomly from the point set, and a point to be inserted is chosen randomly from the new point set. Bose, Gudmundsson, and Morin~\cite{bose2004ordered} presented a semi-dynamic $(1+\varepsilon)$-spanner construction with $\mathcal{O}(\log n)$ maximum degree and diameter. Gao, Guibas, and Nguyen \cite{gao2006deformable} designed the deformable spanner, a fully-dynamic construction with $\mathcal{O}(\log\Delta)$ maximum degree and $\mathcal{O}(\log\Delta)$ lightness, where $\Delta$ is the aspect ratio of the point set, defined as the ratio of the length of the largest edge divided by the length of the shortest edge.

In the spaces of bounded doubling dimension, Roditty~\cite{roditty2012fully} provided the first dynamic spanner construction whose update time (and therefore recourse) depended solely on the number of points ($\mathcal{O}(\log n)$ for point insertion and $\Tilde{\mathcal{O}}(n^{1/3})$ for point removal). This was later improved by Gottlieb and Roditty~\cite{gottlieb2008improved}, who extended this result in doubling metrics and provided a better update time as well as the bounded-degree property. The same authors further improved this construction to have an asymptotically optimal insertion time (and therefore recourse) of $\mathcal{O}(\log n)$ under the algebraic decision tree model~\cite{gottlieb2008optimal} but logarithmic lightness.

It is worth to mention that none of the work mentioned above in the dynamic setting achieve a sub-logarithmic lightness bound on their output. The problem of maintaining a light spanner in this setting has remained open until now.

\subsection{Contributions}
Light-weight fully-dynamic spanners have not been studied in the literature to the best of our knowledge. There are currently no known algorithms that provide a spanner with constant lightness except by rebuilding the whole spanner. We construct a fully-dynamic spanner that aims to minimize the recourse, defined by the number of edges updated after a point insertion or removal. Our spanner maintains, at all times, a lightness and a maximum degree that are bounded by constants. Our maintenance regime achieves amortized constant recourse per point insertion, and amortized $\mathcal{O}(\log\Delta)$ recourse per point deletion. We state and prove our bounds in \cref{thm:main}.

\restate{thm:main}{Our fully-dynamic spanner construction in $d$-dimensional Euclidean spaces has a stretch-factor of $1+\varepsilon$ and a lightness that is bounded by a constant. Furthermore, this construction performs an amortized $\mathcal{O}(1)$ edge updates following a point insertion, and an amortized $\mathcal{O}(\log\Delta)$ edge updates following a point deletion.}%

The hidden constants in our bounds only depend on $\varepsilon$ and $d$. Our amortized bound for recourse after point insertion is optimal but for point deletion we do not claim optimality. However, it is worth to mention that our recourse bound for deletion is not worse than the bounds achieved in prior work.
In order to reach our bounds on recourse we introduce new techniques for iteratively improving the weight of the spanner without losing its other characteristics.

%%%%%%%%%%%%%%%%%
% Preliminaries %
%%%%%%%%%%%%%%%%%

\section{Preliminaries and overview}%
\label{sec:prelim}%
In this section, we cover the notations as well as important definitions and facts that we use throughout the paper. We also provide an overview of what to expect in the upcoming sections and the methods we use to reach our bounds on the recourse.

\textbf{Notation}. We denote the current point set by $V$ and its aspect ratio (as defined earlier) by $\Delta$. We use the notations $\lVert e\rVert$ and $\lVert P\rVert$ for the Euclidean length of an edge $e$ and a path $P$, respectively. We also refer to the Euclidean distance of two points $u$ and $v$ by $\lVert uv\rVert$ or $d(u,v)$, interchangeably. The notation $|E|$ is used when we are referring to the size of a set $E$. Also, for a spanner $S$, the weight of $S$ is shown by $w(S)$.

\subsection{Overview}
We build our spanners on top of a hierarchical clustering $(\T, R)$ of the point set that we maintain dynamically as the point set changes over time. The tree $\T$ represents the parent-child relationship between the clusters, and the constant $R$ specifies how cluster radii magnify on higher levels. Each cluster $\C\in\T$ is specified by a pair $\C=(p, l)$ where $p\in\mathbb{R}^d$ is one of the given points at the center of the cluster and $l\in\mathbb{Z}$ is the level of the cluster. The level of a cluster determines  its radius, $R^l$. It is possible for the same point to be the center of multiple clusters, at different levels of the hierarchy.

We maintain our hierarchy so that after a point insertion, a cluster is added centered at the new point, and after a point deletion, each cluster with the deleted point as its center is removed. Meanwhile, we maintain a \emph{separation} property on the hierarchy to help us build a sparse spanner. Additional edges of our sparse spanner connect pairs of clusters of the same level. Each such edge ensures that pairs of descendants of its endpoints have the desired stretch-factor. These edges form a bounded-degree graph on the clusters at each level, but this property alone would not ensure bounded degree for our whole spanner, because of points that center multiple clusters. Instead, we redistribute the edges of large degree points to derive a bounded-degree spanner.

Maintaining bounded lightness on the other hand is done through an iterative pruning process. We start by removing certain edges to decrease the weight of the spanner, which in turn might cause some other pairs that previously used the removed edge in their shortest paths to not meet the stretch bound of $1+\varepsilon$. We fix those pairs by adding an edge between them, which again increases the weight of the spanner. This causes a chain of updates that alternatively improve the stretch and worsen the weight of the spanner, or improve the weight and worsen the stretch of the spanner. We show that this sequence of updates, which we call \emph{maintenance} updates, if performed properly and for the right pairs, will indeed not end in a loop, and even more strongly, will terminate after an amortized constant number of iterations. This will be covered in  \cref{sec:light}.

The rest of this section includes the techniques we use for our light-weight spanner construction. We start with one of these techniques which is called the \emph{bucketing} technique. Instead of enforcing the stretch bound and the lightness bound on the whole spanner, we partition its edges into a constant number of subsets and we enforce our criteria on these subsets. This partitioning is necessary for the purpose of our analysis.

\textbf{Bucketing.} We maintain a partition of the spanner edges into a constant number of subsets. As we mentioned before, our invariants are enforced on these subsets instead of the whole spanner. Let $C\gg c>1$ be constants that we specify later. We partition the edges of the spanner into $k=\lceil \log_c C\rceil$ subsets, $S_0,S_1,\cdots,S_{k-1}$, so that for each set $S_i$ and any pair of edges $e,f\in S_i$ such that $\lVert e\rVert\geq \lVert f\rVert$, one of the following two cases happen: (i) either $\lVert e\rVert/\lVert f\rVert< c$ or (ii) $\lVert e\rVert/\lVert f\rVert\geq C$. In other words, the edge lengths in the same set are either very close, or very far from each other.

Such partitioning can be maintained easily by assigning an edge $e$ to the set with index $\ind(e) = \lfloor \log_c\lVert e\rVert \rfloor \bmod k$. We refer to this as the \emph{index} of the edge $e$. We also define the \emph{size} of an edge $e$ as $\lvl(e) =\lfloor (\log_c\lVert e\rVert )/k\rfloor$. By definition, if $\ind(e)=i$ and $\lvl(e)=j$, then $c^{kj+i}\leq \lVert e\rVert <c^{kj+i+1}$. We similarly define the index and the size for any pair $(u,v)$ of vertices that are not necessarily connected in the spanner: $\ind(u,v) = \lfloor \log_c\lVert uv\rVert \rfloor \bmod k$, and $\lvl(u,v) =\lfloor (\log_c\lVert uv\rVert )/k\rfloor$.

\textbf{Invariants.} In order to construct a light-weight spanner, we start from our sparse dynamic spanner construction. To distinguish the edges of our light spanner with the edges of our sparse spanner, we call the edges of our sparse spanner the \emph{potential pairs}, since a carefully filtered set of those edges will make up our light-weight spanner. After bucketing the potential pairs, since we maintain the edges of each bucket separately, we must find per-bucket criteria that guarantee the the main properties we expect from our spanner: the stretch-factor and the lightness. We call these criteria the \emph{invariants}. To make sure the union of the buckets meets the stretch bound, we generalize the notion of stretch factor to work on individual buckets and we call it Invariant 1.

\begin{itemize}
    \item \textbf{Invariant 1.} For each pair of vertices $(u,v)\notin S_i$ with index $i$, there must exist a set of edges $e_1=(x_1,y_1), e_2=(x_2,y_2),\dotsc,e_l=(x_l,y_l)$ in $S_i$ such that
    $$\sum_{i=1}^l \lVert e_i\rVert  + (1+\varepsilon)\left(\lVert ux_1\rVert  + \sum_{i=1}^{l-1}\lVert y_ix_{i+1}\rVert  + \lVert y_lv\rVert \right)< (1+\varepsilon)\lVert uv\rVert .$$
    In other words, $u$ must reach $v$ by a path of cost at most $(1+\varepsilon)\lVert uv\rVert $ where the cost of every edge $e\in S_i$ is $\lVert e\rVert $ and the cost of every edge $e\notin S_i$ is $(1+\varepsilon)\lVert e\rVert $.
\end{itemize}

\begin{lemma}%
\label{lem:inv1}%
If Invariant 1 holds for all $S_i$, then $S=\bigcup_{i=0}^{k-1} S_i$ is a $(1+\varepsilon)$-spanner.
\end{lemma}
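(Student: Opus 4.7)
My plan is a minimum-counterexample (equivalently, induction) argument on the Euclidean lengths of pairs. Suppose for contradiction that some pair $(u,v)$ of points in $V$ is not spanned in $S$ by a path of length at most $(1+\varepsilon)\lVert uv\rVert$, and choose such a pair with $\lVert uv\rVert$ as small as possible; this is well-defined because $V$ is finite. If $(u,v)$ belonged to $S_i$ for $i=\ind(u,v)$, then the direct edge would already span it with stretch $1$; so $(u,v)\notin S_i$, and Invariant 1 furnishes edges $e_1,\ldots,e_l\in S_i\subseteq S$ together with endpoints $u,x_1,y_1,x_2,\ldots,y_l,v$ satisfying its displayed inequality.

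The key step, which keeps the induction well-founded, is to observe that each intermediate Euclidean gap $\lVert ux_1\rVert,\lVert y_1x_2\rVert,\ldots,\lVert y_lv\rVert$ is strictly smaller than $\lVert uv\rVert$. Indeed, if any of them were at least $\lVert uv\rVert$, its $(1+\varepsilon)$-scaled contribution to the left-hand side of the invariant would already equal or exceed $(1+\varepsilon)\lVert uv\rVert$, contradicting the strict inequality since every remaining term is nonnegative. Minimality of the counterexample then provides, for each intermediate pair, a path in $S$ of length at most $(1+\varepsilon)$ times its Euclidean distance.

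Concatenating these paths with the edges $e_1,\ldots,e_l$ yields a walk from $u$ to $v$ in $S$ whose total length is at most
$$\sum_{j=1}^{l}\lVert e_j\rVert + (1+\varepsilon)\left(\lVert ux_1\rVert + \sum_{j=1}^{l-1}\lVert y_jx_{j+1}\rVert + \lVert y_lv\rVert\right),$$
which by Invariant 1 is strictly less than $(1+\varepsilon)\lVert uv\rVert$. This contradicts the choice of $(u,v)$ and proves the lemma. I do not anticipate a real obstacle: the lemma is essentially the definitional content of Invariant 1, and the only delicate point is the strict-inequality observation that keeps the minimum-counterexample argument well-founded.
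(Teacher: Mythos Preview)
Your proof is correct and follows essentially the same approach as the paper. The paper phrases the argument as a recursive replacement procedure that terminates because each missing segment in a replacement path is strictly shorter than the pair being replaced (exactly your strict-inequality observation), while you phrase the identical idea as a minimum-counterexample argument; the logical content is the same.
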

\begin{proof}
Let $(u,v)$ be a pair of vertices. We find a $(1+\varepsilon)$-path between $u$ and $v$ using edges in $S$. Let $i=\ind(u,v)$. By Invariant 1 there exists a set of edges $e_1=(x_1,y_1), e_2=(x_2,y_2),\dotsc,e_l=(x_l,y_l)$ in $S_i$ such that
$$\sum_{i=1}^l \lVert e_i\rVert  + (1+\varepsilon)\left(\lVert ux_1\rVert  + \sum_{i=1}^{l-1}\lVert y_ix_{i+1}\rVert  + \lVert y_lv\rVert \right)< (1+\varepsilon)\lVert uv\rVert . $$
Consider the path $P=ux_1y_1x_2y_2\cdots x_ly_lv$ between $u$ and $v$. We call this path the \emph{replacement path} for $(u,v)$. The edges $x_1y_1,x_2y_2,\dotsc,x_ly_l$ are present in $S_i$ (and therefore present in $S$) but the other edges of the replacement path are missing from $S_i$. A similar procedure can be performed on the missing pairs recursively to find and replace them with their corresponding replacement paths. This recursive procedure yields a $(1+\varepsilon)$-path for $(u,v)$ and it terminates because the length of each missing edge in a replacement path is smaller than the length of the edge that is being replaced (otherwise Invariant 1 would not hold).
\end{proof}

Furthermore, we bound the weight of the spanner by ensuring the second invariant, which is the leapfrog property on $S_i$. \cite{das1995new}

\begin{itemize}
    \item \textbf{Invariant 2.} Let $(u,v)\in S_i$. For every subset of edges $e_1=(x_1,y_1), e_2=(x_2,y_2),\dotsc,e_l=(x_l,y_l)$ in $S_i$ the inequality
    $$\sum_{i=1}^l \lVert e_i\rVert + (1+\varepsilon)\left(\lVert ux_1\rVert + \sum_{i=1}^{l-1}\lVert y_ix_{i+1}\rVert + \lVert y_lv\rVert \right)> (1+\varepsilon')\lVert uv\rVert$$
    holds, where $\varepsilon'<\varepsilon$ is a positive constant. In other words, $u$ should not be able to reach $v$ by a (short) path of cost $(1+\varepsilon')\lVert uv\rVert$, where the edge costs are the same as in Invariant 1.
\end{itemize}

The leapfrog property leads to a constant upper bound on the lightness of $S_i$, for each $0\leq i < k$. And since the weight of the minimum spanning tree on the end-points of each $S_i$ is at most a constant factor of the weight of the minimum spanning tree on the whole point set, this implies a constant upper bound on the lightness of the spanner $S=\bigcup_{i=0}^{k-1} S_i$. As well as the weight bound, we prove, in the following lemma, that Invariant 2 implies a similar result to the packing lemma, but for the number of edges on the same level.

\begin{lemma}[Edge packing] \label{lem:edge-packing}
Let $E$ be a set of edges (segments) with the same index and the same level that is consistent with Invariant 2. Also, assume that $E$ is contained in a ball of radius $R$, and the minimum edge size in $E$ is $r$. Then
$$|E|< C_1(R/r)^{2d}$$
where $C_1=(2(1+\varepsilon)/\varepsilon')^{2d}d^d$ is a constant.
\end{lemma}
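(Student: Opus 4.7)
The plan is to translate Invariant 2 into a pairwise geometric separation among the edges of $E$ (viewed as points in $\mathbb{R}^{2d}$), and then finish with a standard grid-packing argument.

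For the separation step, I would take any two distinct edges $e_1 = (u,v)$ and $e_2 = (x,y)$ in $E$, assume without loss of generality that $\lVert uv\rVert \geq \lVert xy\rVert$, and apply Invariant 2 to $(u,v)$ with the single-edge subset $\{(x,y)\}$. This yields
$$\lVert xy\rVert + (1+\varepsilon)\bigl(\lVert ux\rVert + \lVert yv\rVert\bigr) > (1+\varepsilon')\lVert uv\rVert.$$
Bounding $\lVert xy\rVert \leq \lVert uv\rVert$ on the right and using $\lVert uv\rVert \geq r$ then gives $\lVert ux\rVert + \lVert yv\rVert > \delta$, where $\delta := \varepsilon' r/(1+\varepsilon)$. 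Thus, identifying each edge with its ordered endpoint pair in $\mathbb{R}^d \times \mathbb{R}^d$, any two distinct edges of $E$ are separated by more than $\delta$ in the product metric $d_*\bigl((u,v),(x,y)\bigr) := \lVert ux\rVert + \lVert yv\rVert$.

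For the packing step, I would tile $\mathbb{R}^d$ by axis-aligned cubes of side $\delta/(2\sqrt{d})$, each of Euclidean diameter $\delta/2$. The number of such cubes meeting the ball of radius $R$ is at most $N = O\bigl((R\sqrt{d}/\delta)^d\bigr)$. If two distinct edges of $E$ had their first endpoints in a common cube and their second endpoints in a common cube, the separation bound would be violated, since $\lVert ux\rVert + \lVert yv\rVert \leq \delta/2 + \delta/2 = \delta$. Hence the map $e \mapsto (\text{cube of first endpoint}, \text{cube of second endpoint})$ is injective on $E$, giving $|E| \leq N^2 = O\bigl(d^d(R/\delta)^{2d}\bigr)$. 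Substituting the value of $\delta$ yields a bound of the claimed form $C_1(R/r)^{2d}$.

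The main obstacles are technical rather than conceptual. First, edges are undirected segments, so I would canonicalize their orientations (e.g., by a lexicographic rule on endpoints) before the injection, which at worst costs a factor of $2$ in the constant. Second, matching the stated constant $(2(1+\varepsilon)/\varepsilon')^{2d} d^d$ exactly, rather than the looser bound the naive cube covering yields, likely requires a sharper packing argument—for example, packing balls directly in $\mathbb{R}^{2d}$ using the Euclidean consequence $\sqrt{\lVert ux\rVert^2 + \lVert yv\rVert^2} \geq \delta/\sqrt{2}$ rather than routing through a product grid—but this is purely a volumetric refinement and does not change the structure of the proof.
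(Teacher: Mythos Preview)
Your proposal is correct and follows essentially the same approach as the paper: derive a pairwise geometric separation between edges from Invariant~2 applied with a single-edge subset, then cover the ball and argue that the map from edges to pairs of cells is injective, giving $|E|\le M^2$. The only cosmetic differences are that the paper phrases the separation as $\max(\lVert ux\rVert,\lVert yv\rVert)>\tfrac{\varepsilon'}{2(1+\varepsilon)}r$ rather than your sum bound $\lVert ux\rVert+\lVert yv\rVert>\tfrac{\varepsilon'}{1+\varepsilon}r$ (each implies the other up to a factor of~$2$), and covers with balls rather than cubes; your explicit handling of edge orientation is a point the paper leaves implicit.
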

\begin{proof}
A simple observation is that for any two segments $(u,v)$ and $(y,z)$ in $E$ we must have
$$\max(\lVert uy\rVert ,\lVert vz\rVert ) > \frac{\varepsilon'}{2(1+\varepsilon)}\cdot r$$
because otherwise, assuming that $\lVert uv\rVert \geq \lVert yz\rVert $, for the pair $(u,v)$ and the sequence $e_1=(y,z)$, the left hand side of the inequality in Invariant 2 would be at most
$$2(1+\varepsilon)\cdot\frac{\varepsilon'}{2(1+\varepsilon)}\cdot r + \lVert yz\rVert  \leq (1+\varepsilon')\lVert uv\rVert $$
contradicting the fact that $E$ is consistent with Invariant 2. Thus, given a covering of a ball of radius $R$ with $M$ balls of radius $r'=\frac{\varepsilon'}{2(1+\varepsilon)}\cdot r$, every segment in $E$ has its endpoints in a unique pair of balls, otherwise Invariant 2 will be compromised. Hence, $|E|\leq M^2$. A simple calculation yields a covering with $M<(2(1+\varepsilon)/\varepsilon')^dd^{d/2}(R/r)^d$ balls.
\end{proof}

We can simplify the two invariants by defining a distance function $d_i^*$ over the pairs of vertices,
\begin{definition}
Let $S_i^*$ be a complete weighted graph over the vertices such that the weight of an edge $e$ in $S_i^*$ is defined as
$$
w(e)=\begin{cases}
\lVert e\rVert  & \text{if }e\in S_i \\
(1+\varepsilon)\lVert e\rVert  & \text{if }e\notin S_i
\end{cases}
$$
We define an \emph{extended path} between $u$ and $v$ in $S_i$ as a path between $u$ and $v$ in $S_i^*$ that only uses edges $(y,z)$ where $\lvl(y,z)<\lvl(u,v)$.
We also define the length of an extended path as the sum of its edge weights in $S_i^*$. Finally, we define $d_i^*(u,v)$ as the length of the shortest extended path between $u$ and $v$.
\end{definition}

Using this new distance function we can rephrase the two invariants as follows.

\begin{itemize}
    \item \textbf{Invariant 1.} For every pair $(u,v)\notin S_i$ with $\ind(u,v)=i$, we have $d_i^*(u,v)< (1+\varepsilon)d(u,v)$.
    \item \textbf{Invariant 2.} For every pair $(u,v)\in S_i$, we have $d_i^*(u,v)> (1+\varepsilon')d(u,v)$.
\end{itemize}

It is worth noting that these forms are not exactly equivalent to the previous forms, as we are only considering paths of lower level edges in the definition of $d_i^*$, while a short path in the spanner could potentially contain an edge of the same level. This provides a stronger variation of Invariant 1, which still implies a $1+\varepsilon$ stretch for the spanner. However, this change weakens Invariant 2. But as we will see, a careful addition of the same-level edges can prevent any possible violations of Invariant 2 that could be caused by this new form.

\textbf{Maintaining the invariants.} The quality of our light-weight dynamic spanner depends on the two invariants we introduced above, and an update like a point insertion or removal could cause one of them to break, if not both. Therefore, we establish a procedure that addresses the inconsistencies and enforces the invariants to hold at all times.

The procedure for fixing a violation of Invariant 1 is straightforward: as long as there exists a pair $(u,v)$ that violates Invariant 1 for its corresponding subset $S_i$, add an appropriate potential pair to $S_i$ that connects an ancestor of $u$ to an ancestor of $v$ in the hierarchy $\T$. This resolves the inconsistency for $(u,v)$ if the ancestors are chosen properly, but it might cause other pairs to violate Invariant 2 because of this edge addition. We will prove that if certain criteria are met, there would be no side effect on the same-level pairs and the addition can only result in a constant amortized number of inflicted updates on higher level pairs.

Fixing a violation of Invariant 2, on the other hand, is more tricky. After we remove the violating edge $(u,v)$ from its subset $S_i$, the effect on higher level pairs would be similar to the previous case, but removing $(u,v)$ might cause multiple updates on the same level, which in turn cascade to higher levels. We therefore analyze the removal of $(u,v)$ together with the subsequent additions of same-level edges that aim to fix the incurred violations of Invariant 1, and we prove that a constant amortized bound on the number of inflicted updates on higher level pairs would still hold. We get to the details of our maintenance updates in \cref{sec:maintenance}.

\textbf{Amortized analysis.} We analyze the effects of an update (edge addition and removal) on higher level pairs using a potential function, for each $S_i$ separately. We define our potential function over the potential pairs in $S_i$. The change in the potential function shows how much a pair is close to violating one of the invariants. The higher the potential, the closer the pair is to violating the invariants. This enables us to assign a certain amount of credit to each update, that can be used to pay for the potential change of the updated pair and the affected pairs, which in turn results on an amortized upper bound on the number of edge updates in the future. Therefore, for a potential pair $(u,v)$ %with index $i$ (or an edge in $S_i$ which does not have the same index),
with index $i$ and
following an update in $S_i$,
\begin{itemize}
    \item if $(u,v)\in S_i$ and $d_i^*(u,v)$ decreases, or
    \item if $(u,v)\notin S_i$, and $d_i^*(u,v)$ increases,
\end{itemize}
we increase the potential of the pair $(u,v)$ to account for its future violation of the invariants.

More specifically, we define the potential function $p_i(u,v)$ of a potential pair $(u,v)$ in $S_i$ as
$$
p_i(u,v)=\begin{cases}
(1+\varepsilon)-d_i^*(u,v)/d(u,v) & \text{if $(u,v)\in S_i$}\\
C_\phi\cdot\left(d_i^*(u,v)/d(u,v)-(1+\varepsilon')\right) & \text{if $(u,v)\notin S_i$ and $\ind(u,v)=i$}
\end{cases}
$$
where $C_\phi>1$ is a positive constant coefficient that we specify later. This implies that if $p_i(u,v)<\varepsilon-\varepsilon'$, then both invariants would hold for the pair $(u,v)$ (in $S_i$). Based on this observation, we define a potential function on $S_i$ in the following way,
$$\Phi_i=\sum_{(u,v)\in \calP_i\cup S_i}p_i(u,v)$$
where $\calP_i$ is the set of potential pairs with index $i$. We simply define the potential of the whole spanner as
$$\Phi = \sum_i \Phi_i$$

We add another term to this potential function later in \cref{sec:light} to account for future edges between the existing nodes.

$$\Phi^* = \Phi + \frac{p_{max}}{2}\cdot\sum_{i=1}^n(D_{max}-\deg_{\calS_1}(v_i))$$

We first prove some bounds on $\Phi$ but we ultimately use the adjusted potential function $\Phi^*$ to prove our amortized bounds on the number of updates. In the remainder of this paper, we specify our sparse and light-weight construction in more details, and we will provide our bounds on the recourse in each case separately.

%%%%%%%%%%%%%%%%%%
% Sparse spanner %
%%%%%%%%%%%%%%%%%%

\section{Sparse spanner}%
\label{sec:sparse}%

In this section, we introduce our dynamic construction for a sparse spanner with constant amortized recourse per point insertion and $\mathcal{O}(\log\Delta)$ recourse per point deletion. We build our spanner on top of a hierarchical clustering that we design early in this section.

Krauthgamer and Lee~\cite{krauthgamer2004navigating} showed how to maintain such hierarchical structures in $\mathcal{O}(\log\Delta)$ update time by maintaining $\varepsilon$-nets. However, this hierarchy is not directly applicable to our case since a point can appear $\log\Delta$ times on its path to root, which would imply a $\mathcal{O}(\log\Delta)$ bound on the degree of the spanner instead of a constant bound. Cloe and Gottlieb~\cite{cole2006searching} improved the update time of this hierarchy to $\mathcal{O}(\log n)$. Gottlieb and Roditty~\cite{gottlieb2008optimal} later introduced a new hierarchical construction with the same update time for their fully-dynamic spanner, which also satisfied an extra close-containment property. Here, we introduce a simpler hierarchy that suits our needs and does not require the close-containment property. Our hierarchy performs constant cluster updates for a point insertion and $\mathcal{O}(\log\Delta)$ cluster updates for a point deletion.

Our hierarchy consists of a pair $(\T, R)$ where $\T$ is a rooted tree of clusters and $R>0$ is a constant.
Every cluster $\C\in\T$ is associated with a center $\cen(\C)\in V$ and a level $\lev(\C)\in\mathbb{Z}$.
The level of a cluster specifies its radius; $\C$ \emph{covers} a ball of radius $R^{\lev(\C)}$ around $\cen(\C)$.
We denote the \emph{parent} of $\C$ in $\T$ by $\prn(\C)$. The root of $\T$, denoted by $\T.root$, is the only cluster without a parent.
Furthermore, the level of a parent is one more than of the child, i.e. $\lev(\prn(\C))=1+\lev(\C)$, for all $\C\in\T$ except the root. A parent must cover the centers of its children.

Besides these basic characteristics, we require our hierarchy to satisfy the separation property at all times. This property states that the clusters at the same level are separated by a distance proportional to their radii,

\begin{definition}[Separation property]
    For any pair of same-level clusters $\C_1,\C_2\in\T$ on level $j$, $$d(\cen(\C_1), \cen(\C_2)) > R^{j}$$
\end{definition}

Each point at the time of insertion creates a single cluster centered at the inserted point, and during the future insertions, might have multiple clusters with different radii centered at it. In fact, each point could have clusters centered at it in at most $\mathcal{O}(\log\Delta)$ levels. At the time of deletion, any cluster that is centered at the deleted point will be removed.

Our clusters are of two types: explicit clusters and implicit clusters. Explicit clusters are the ones we create manually during our maintenance steps. Implicit clusters are the lower level copies of the explicit clusters that exist in the hierarchy even though we do not create them manually. Therefore, if a cluster $\C=(p,l)$ is created in the hierarchy at some point, we implicitly assume clusters $(p,i)$ for $i<l$ exist in the hierarchy after this insertion, and they are included in their corresponding $\T_i$ as well. We maintain the separation property between all clusters, including the implicit ones. We use these implicit clusters for constructing our spanner.

\subsection{Maintaining the hierarchy}
We initially start from an empty tree $\T$ and a constant $R$ that we specify later.

\textbf{Point insertion.} Let $\T_i$ be the set of clusters with level $i$, i.e. $\T_{\lvl(\T.root)}$ only contains the root, $\T_{\lvl(\T.root)-1}$ contains root's children, etc. Upon the insertion of a point $p$, we look for the lowest level (between explicit clusters) $i$ that $p$ is covered in $\T_i$. We insert $\C=(p,i-1)$ into the hierarchy. Since $p$ is covered in $T_i$, we can find a cluster $\C'=(p,i)$ that covers $p$ and assign it as the parent of $\C$ (\cref{alg:hierarchy-insert}).

In the case that $p$ is not covered in any of the levels in $\T$, which we handle by replicating the root cluster from above until it covers the new point, then the insertion happens the same way as before.

\begin{algorithm}[ht]
\caption{Inserting a point to the hierarchy.}\label{alg:hierarchy-insert}
\begin{algorithmic}[1]
\Procedure{Insert-to-Hierarchy}{$\T$, $R$, $p$}
\If {$|\T|=0$}%
\State Add a root cluster $\C=(p, 0)$ to $\T$.%
\State \Return $\C$%
\EndIf%
\State Let $i$ be the lowest level in $\T$.%
\While {$\T_i$ does not cover $p$}%
\State Increase $i$ by 1.%
\If {$i>\lvl{\T.root}$}%
\State Create a new cluster $\C=(\T.root, \lvl(\T.root)+1)$.%
\State Make $\C$ the new root of the hierarchy.%
\State The old root becomes a child of $\C$.%
\EndIf%
\EndWhile%
\State Let $\C'$ be a cluster in $\T_i$ that covers $p$.%
\State Create a cluster $\C=(p,\lvl(\C')-1)$ and add it as a child of $\C'$.%
\EndProcedure%
\end{algorithmic}
\end{algorithm}

The basic characteristics of the hierarchy hold after an insertion. We now show that the separation property holds after the insertion of a new cluster $\C=(p,l)$. Assume, on the contrary, that there exists a cluster $\C'=(q,l)$ that $(\C,\C')$ violates the separation property. $\C$ is inserted on level $l$, thus $p$ is not covered by $\T_l$. According to the assumption, $d(q, p) \leq R^{l}$, meaning that $\C'$ covers $p$. This contradicts the fact that $\T_l$ does not cover $p$ since $\C'\in\T_l$. A similar argument shows that the separation property holds for the implicit copies of $\C$ as well.

\textbf{Point deletion.} Upon the deletion of a point $p$, we remove all the clusters centered at $p$ in the hierarchy. The clusters centered at $p$ create a chain in $\T$ that starts from the lowest level explicit copy of $p$ and ends at the highest level copy. We remove this chain level by level, starting from the lowest level cluster $\C=(p,l)$ that is centered at $p$. Upon the removal of $\C$, we loop over children of $\C$ one by one, and we try to assign them to a new parent. If we find a cluster on level $l+1$ that covers them, then we assign them to that cluster, otherwise we replicate them on one level higher and we continue the process with the remaining children. After we are done with $(p,l)$, we repeat the same process with $(p,l+1)$, until no copies of $p$ exist in the hierarchy (\cref{alg:hierarchy-delete}).

\begin{algorithm}[ht]
\caption{Deleting a point from the hierarchy.}\label{alg:hierarchy-delete}
\begin{algorithmic}[1]
\Procedure{Delete-from-Hierarchy}{$\T$, $R$, $p$}
\State Let $\C=(p,l)$ be the lowest level (explicit) cluster centered at $p$.%
\State Delete $\C$ from $\T$ and mark its children.%
\While {there exists a marked cluster on level $l-1$}%
\State Let $\C'=(q,l-1)$ be a marked cluster.%
\State Find a cluster $\C''$ on level $l$ that covers $q$.%
\If {such cluster exists}%
\State Assign $\C''$ as the parent of $\C'$ and unmark $\C'$.%
\Else%
\State Create $\C''=(q,l)$ and make it the parent of $\C'$.%
\State Mark $\C''$ and unmark $\C'$.%
\EndIf%
\EndWhile%
\If {there still exists a marked cluster in $\T$}%
\State Increase $l$ by one and repeat the while loop above.%
\EndIf%
\EndProcedure%
\end{algorithmic}
\end{algorithm}

Again, the basic characteristics of the hierarchy hold after a deletion. We need to show that the separation property still holds. Immediately after removing the cluster $(p,l)$ the separation property obviously holds. After re-assigning a marked child to another parent the property still holds since no cluster has changed in terms of their center or level. If a marked child is replicated on level $l+1$, it means that there was no cluster covering it on this level, otherwise it would have been assigned as its new parent. Therefore, the separation property holds after the replication on level $l+1$. We will prove more properties of our hierarchy later on when we define the spanner.

\subsection{The initial spanner}
Our initial spanner is a sparse spanner that is defined on the hierarchy $\T$ and it has bounded cluster degree but not bounded point degree. The reason that a bounded degree on the clusters would not imply a bounded degree on the point set is that every point could have multiple clusters centered at it, each of which have a constant number of edges connected to them. This would cause the degree of the point to get as large as $\Omega(\log\Delta)$. Later we will fix this issue by assigning edges connected to large degree points to other vertices.

The initial spanner consists of two types of edges. The first type that we already mentioned before, is the edges that go between clusters of the same level. These edges guarantee a short path between the descendants of the two clusters, similar to a spanner built on a well-separated pair decomposition. And the second type is the parent-child edges, that connect every node to its children. The edge weight between two clusters is the same as the distance between their centers.

We define the spanner formally as follows,
\begin{definition}[Initial spanner]
Let $(\T,R)$ be a hierarchy that satisfies the separation property. We define our sparse spanner $\calS_0$ to be the graph on the nodes of $\T$ that contains the following edges,
\begin{itemize}
	\item Type I. Any pair of centers $p$ and $q$ whose clusters are located on the same level and $d(p,q)\leq \lambda\cdot R^l$ are connected together. Here, $\lambda$ is a fixed constant.
	\item Type II. Any cluster center in $\T$ is connected to the centers of its children in $\T$.
\end{itemize}
\end{definition}

Note that the implicit clusters are also included in this definition. Meaning that if two implicit same-level clusters are close to each other then there would be an edge of type I between them. We show that the spanner $\calS_0$ has a bounded stretch.

\begin{lemma}[Stretch-factor]%
\label{lem:initial-stretch}%
For large enough $\lambda=\mathcal{O}(\varepsilon^{-1})$ the stretch-factor of $\mathcal{S}_0$ would be bounded from above by $1+\varepsilon$.
\end{lemma}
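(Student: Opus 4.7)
The plan is to show, for any pair of points $u,v \in V$ with $D=d(u,v)$, a spanner path of length at most $(1+\varepsilon)D$ by routing through a carefully chosen pair of ancestors in the hierarchy. Write $\alpha = R/(R-1)$, a constant depending only on $R$, and choose $l$ to be the smallest integer satisfying $(\lambda - 2\alpha) R^l \ge D$. This choice will give $R^l = O(D/\lambda)$ because the next smaller $R$-power fails the inequality, and it is the tightest level at which the two sides have a chance to be connected by a type~I edge.

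Next, I would walk up from $u$ (and independently from $v$) along the ancestor chain in $\mathcal{T}$ using type~II edges, reaching an ancestor cluster centered at some $u_l \in V$ (respectively $v_l$) on level $l$. Because $\mathcal{T}$ contains implicit copies of every point on every level below its highest explicit copy, every step in this chain is a legitimate type~II edge of the spanner; steps between two implicit copies of the same center have length $0$, while steps that change the center at level $i$ have length at most $R^i$ by the covering property of the parent. Summing the nonzero contributions gives a telescoping geometric series bounded by $\sum_{i \le l} R^i \le \alpha R^l$.

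Then I would apply the triangle inequality to obtain $d(u_l, v_l) \le D + 2 \alpha R^l \le \lambda R^l$ by the defining inequality of $l$, which certifies the existence of a type~I edge between $u_l$ and $v_l$ in $\mathcal{S}_0$ (with the trivial degenerate case $u_l = v_l$ requiring no type~I edge at all). Concatenating the walk-up from $u$, the single type~I edge, and the walk-down to $v$ yields a spanner path of length at most
\[
\alpha R^l + d(u_l, v_l) + \alpha R^l \;\le\; D + 4\alpha R^l \;\le\; D\!\left(1 + \frac{4\alpha R}{\lambda - 2\alpha}\right).
\]
Choosing $\lambda$ a sufficiently large constant multiple of $1/\varepsilon$ (for instance $\lambda \ge 2\alpha + 4\alpha R/\varepsilon = \mathcal{O}(\varepsilon^{-1})$) drives the bracketed factor below $1+\varepsilon$, giving the claim.

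The main obstacle I would expect is bookkeeping in the walk-up step: a point $u$ may have many implicit copies stacked with length-$0$ edges before the ancestor chain starts genuinely moving through space, so I need to justify that the nonzero-length steps occur only at distinct levels and cumulate to $\alpha R^l$ rather than something larger. A secondary subtlety is verifying that the level $l$ chosen in step~1 is realized in the current hierarchy; this is handled by the fact that \textsc{Insert-to-Hierarchy} is free to raise the root to whatever level is needed, so the ancestor at level $l$ always exists for sufficiently large $l$, and the inequality I chose it to satisfy only pins down $R^l$ up to a factor of $R$, which is already absorbed into the final constant.
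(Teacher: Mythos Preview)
Your proof is correct and follows essentially the same strategy as the paper's: climb the hierarchy from both endpoints to a level where the ancestors are within distance $\lambda R^l$, use the type~I edge there, and bound the detour by a geometric sum of parent--child edge lengths. The only cosmetic difference is that you pick the target level $l$ analytically from $D$ up front, whereas the paper locates it by iterating upward one level at a time and tracking the ratio $x_i = d(p_i,q_i)/R^{l'+i-1}$; both routes yield the same $\lambda = \mathcal{O}(\varepsilon^{-1})$ threshold.
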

\begin{proof}
Let $p$ and $q$ be two points in the point set, and also let $\C=(p,l)$ and $\C'=(q,l')$ be the highest level clusters in $\T$ that are centered at $p$ and $q$, respectively. By symmetry, assume $l\geq l'$. If $d(p,q) \leq \lambda\cdot R^{l'}$, then there is an edge between the (possibly implicit) cluster $(p,l')$ and $\C'$. This edge connects $p$ and $q$ together, therefore the stretch would be equal to 1 for this pair. If $d(p,q) > \lambda\cdot R^{l'}$, we perform an iterative search for such shortcut edge. Start with $\C=(p,l')$ and $\C'=(q,l')$ and every time that the inequality $d(p,q) \leq \lambda\cdot R^{l'}$ is not satisfied set $\C$ and $\C'$ to their parents and set $l'=l'+1$ and check for the inequality again. We show that the inequality eventually will be satisfied. Let $p_i$ and $q_i$ be the centers of $\C$ and $\C'$ on the $i$-th iteration of this iterative process ($i=1,2,\dotsc$), and let $l'$ have its initial value before any increments. We have $d(p_{i+1}, p_i)\leq R^{l'+i}$ and $d(q_{i+1},q_i)\leq R^{l'+i}$. By the triangle inequality,
$$d(p_{i+1},q_{i+1})\leq d(p_{i+1},p_i) + d(p_i,q_i) + d(q_{i+1},q_i) \leq 2\cdot R^{l'+i} + d(p_i,q_i)$$
Denote the ratio $d(p_i,q_i)/R^{l'+i-1}$ by $x_i$. We have,
$$x_{i+1}\leq 2 + \frac{x_i}{R}$$
Therefore, $x_i$ is roughly being divided by $R$ on every iteration and it stops when $x_i\leq \lambda$. We can easily see that the loop terminates and the value of $x_i$ after the termination would be greater than $\lambda/R$. This particularly shows that the edge between $\C$ and $\C'$ is a long shortcut edge when $\lambda$ is chosen large enough, since its length is more than $\lambda/R$ times the radius of the centers it is connecting.

Now we show that this shortcut edge would be good enough to provide the $1+\varepsilon$ stretch factor for the initial points, $p$ and $q$.
Note that because of the parent-child edges, $p$ can find a path to $q$ by traversing $p_i$s in the proper order and using edge between $p_i$ and $q_i$ and traversing back to $q$. We show that the portion of the path from $p$ to $p_i$ (and similarly from $q$ to $q_i$) is at most $\frac{R^{l'+i}-1}{R-1}$. We prove it only for $p$, the argument for $q$ is similar. Note that if the termination level $l'+i\leq l$ then $p_i=p$ and this path length from $p$ to $p_i$ would be 0, confirming our claim for $p$. Therefore, we assume the termination level is above the level of $p$. The length of the path from $p$ to $p_i$ that only uses type II edges would be at most
$$R^{l+1}+\cdots+R^{l'+i} < \frac{R^{l'+i+1}-1}{R-1}$$
Thus the length of the path from $p$ to $q$ would be at most
$$2\cdot\frac{R^{l'+i+1}-1}{R-1} + d(p_i,q_i)$$
On the other hand, by the triangle inequality,
$$d(p,q)\geq d(p_i,q_i)-2\cdot\frac{R^{l'+i+1}-1}{R-1}$$
Finally, the stretch-factor of this path would be at most
$$\frac{2\cdot\frac{R^{l'+i+1}-1}{R-1} + d(p_i,q_i)}{d(p_i,q_i)-2\cdot\frac{R^{l'+i+1}-1}{R-1}}$$
A simple calculation yields that this fraction is less than $1+\varepsilon$ when $\lambda=2(2+\varepsilon)\varepsilon^{-1}R=\mathcal{O}(\varepsilon^{-1})$.
\end{proof}

Next, we show that the degree of every cluster in $\mathcal{S}_0$ is bounded by a constant. Note that this does not imply a bounded degree on every point, since a point could be the center of many clusters.

\begin{lemma}[Degree bound]%
\label{lem:initial-degree}%
The degree of every cluster in $\mathcal{S}_0$ is bounded by $\mathcal{O}(\varepsilon^{-d})$.
\end{lemma}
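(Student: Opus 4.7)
The plan is to bound the two edge types separately and then sum. Fix a cluster $\C=(p,l)$ and consider its neighbors in $\mathcal{S}_0$.

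For Type I edges, every same-level neighbor $\C'=(q,l)$ has $d(p,q)\leq \lambda\cdot R^l$, so the centers of all Type I neighbors lie in the Euclidean ball $B(p,\lambda R^l)$. By the separation property at level $l$, any two distinct such centers are at distance strictly greater than $R^l$. Thus the centers form a set of points in $B(p,\lambda R^l)$ that are pairwise $\Omega(R^l)$-separated. A standard volume/packing argument in $\mathbb{R}^d$ (each such center is surrounded by a disjoint ball of radius $R^l/2$, all contained in $B(p,(\lambda+1/2)R^l)$) bounds the count by $\mathcal{O}((\lambda+1)^d)=\mathcal{O}(\lambda^d)$. Plugging in $\lambda=\mathcal{O}(\varepsilon^{-1})$ from \cref{lem:initial-stretch} yields $\mathcal{O}(\varepsilon^{-d})$ Type I neighbors.

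For Type II edges, there is at most one parent. Each child $\C''=(q',l-1)$ of $\C$ has its center $q'$ covered by $\C$, so $d(p,q')\leq R^l$. By the separation property applied at level $l-1$, any two children's centers are separated by more than $R^{l-1}$. A second packing argument bounds the number of children by $\mathcal{O}((R^l/R^{l-1})^d)=\mathcal{O}(R^d)=\mathcal{O}(1)$, since $R$ is a constant independent of $\varepsilon$. (Recall that the hierarchy's implicit clusters are included, so the same analysis applies uniformly across levels.)

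Adding the two contributions, the total degree of $\C$ in $\mathcal{S}_0$ is $\mathcal{O}(\varepsilon^{-d})+\mathcal{O}(1)=\mathcal{O}(\varepsilon^{-d})$. The only non-routine step is verifying that the separation property indeed holds between all clusters of the level under consideration, including the implicit ones; this was already established in the hierarchy maintenance discussion, so the argument is essentially a double application of the standard Euclidean packing lemma.
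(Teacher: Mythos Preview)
Your proof is correct and follows essentially the same approach as the paper: bound Type~I neighbors via a packing argument using the separation property at level $l$ (giving $\mathcal{O}(\lambda^d)=\mathcal{O}(\varepsilon^{-d})$ since $\lambda=\mathcal{O}(\varepsilon^{-1})$), bound children via a packing argument using the separation property at level $l-1$ (giving $\mathcal{O}(R^d)=\mathcal{O}(1)$), and add the single parent. The paper's proof is the same two-part packing argument, just phrased slightly more tersely.
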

\begin{proof}
We first prove that the type I degree of every cluster $\C=(p,l)$ is bounded by a constant. Let $\C'=(q,l)$ be a cluster that has a type I edge to $\C$. This means that $d(p,q)\leq \lambda\cdot R^l$. By the separation property, $d(p,q)>R^{l}$. Thus, by the packing lemma there are at most
$$d^{d/2}\lambda^d=\mathcal{O}(\varepsilon^{-d})$$
type I edges connected to $\C$. The last bound comes from the fact that a choice of $\lambda=\mathcal{O}(\varepsilon^{-1})$ would be enough to have a bounded stretch.

Now we only need to show that the parent-child edges also add at most a constant degree to every cluster, which is again achieved by the packing lemma. Because the children of this cluster are located in a ball of radius $R^l$ around its center, $p$, and they are also pair-wise separated by a distance of at least $R^{l-1}$, we can conclude that the number of children of $\C$ would be upper bounded by $d^{d/2}R^{d}=\mathcal{O}(1)$.
\end{proof}

\textbf{Representative assignment.} So far we showed how to build a spanner that has a bounded degree on each cluster and the desired stretch-factor of $1+\varepsilon$. But this spanner does not have a degree bound on the actual point set and that is a property we are looking for. Here, we show how to reduce the load on high degree points and distribute the edges more evenly so that the bounded degree property holds for the point set as well.

The basic idea is that for every cluster $\C$ in the hierarchy, we pick one of lower level clusters, say $\C'$, to be its \emph{representative} and play its role in the final spanner, meaning that all the spanner edges connecting $\C$ to other clusters will now connect $\C'$ to those clusters after the re-assignment. This re-assignment will be done for every cluster in the hierarchy until every cluster has a representative. Only then we can be certain that the spanner has a bounded degree on the current point set. Since by \cref{lem:initial-degree} the degree of every center is bounded by a constant, we only need to make sure that every point is representing at most a constant number of clusters in the hierarchy.

First, we define the level of a point $p$, denoted by $\lvl(p)$ to be the level of the highest level cluster that has $p$ as its center, i.e. $\lvl(p)=\max_{(p,l)\in\T}l$.

\begin{definition}[Representative assignment]%
\label{def:leaf-rep}%
Let $\T$ be a hierarchy. We define the representative assignment of $\T$ to be a function $\L$ that maps every cluster $\C=(p,l)$ of $\T$ to a point $q$ in the point set such that $l\geq \lvl(q)$ and $d(p,q)\leq R^{l}$. We say $\L$ has bounded repetition $b$ if $|\L^{-1}(q)|\leq b$ for every point $q$.
\end{definition}

Connecting the edges between the representatives instead of the actual centers would give us our bounded-degree spanner.
\begin{definition}[Bounded-degree spanner]%
\label{def:sparse-spanner}%
Define the spanner $\mathcal{S}_1$ to be the spanner connecting the pair ($\L(\C),\L(\C'))$ for every edge $(\C,\C')\in\mathcal{S}_0$.
\end{definition}

Now we show that this re-assignment of the edges would not affect the stretch-factor and the degree bound significantly if the clusters are small enough, or equivalently, $\lambda$ is chosen large enough.
\begin{lemma}[Stretch-factor]%
\label{lem:sparse-stretch}%
For large enough $\lambda=\mathcal{O}(\varepsilon^{-1})$ and any representative assignment $\L$ the stretch-factor of $\mathcal{S}_1$ would be bounded from above by $1+\varepsilon$.
\end{lemma}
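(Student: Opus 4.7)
The plan is to transport the $\calS_0$-path built in the proof of \cref{lem:initial-stretch} over to $\calS_1$ by replacing each cluster on that path with its representative under $\L$, and then to absorb the resulting perturbations into the stretch bound. Fix $p,q\in V$ and let $P_0$ be the $\calS_0$-path given by that lemma: it climbs from $p$ via type~II edges to some termination level $L$, crosses a single type~I shortcut at level $L$, and descends via type~II edges to $q$. Let $P_1$ be the walk in $\calS_1$ whose vertices are the images of the clusters of $P_0$ under $\L$; consecutive vertices of $P_1$ are joined by edges of $\calS_1$ by \cref{def:sparse-spanner}.

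The key estimate, edge by edge, is the triangle inequality
$$d(\L(\C),\L(\C')) \;\le\; d(\cen(\C),\cen(\C')) + R^{\lev(\C)} + R^{\lev(\C')},$$
which follows from the fact that each representative lies within its cluster's radius from the center. Summing the $R^{\lev(\cdot)}$ terms along the ascending $p$-side and the descending $q$-side of $P_0$ gives two geometric series whose totals are each bounded by a constant multiple of $R^{L}$ (using $R>1$). Choosing $\L$ so that the top-level clusters centered at $p$ and $q$ map back to $p$ and $q$ themselves makes the endpoints of $P_1$ exactly $p$ and $q$ and yields $\lVert P_1\rVert \le \lVert P_0\rVert + C_R R^{L}$ for a constant $C_R$ depending only on $R$.

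To close the argument I would combine this with two facts already extracted inside the proof of \cref{lem:initial-stretch}: first, at termination the shortcut-edge length is $\Omega(\lambda R^{L-1})$, from which the triangle inequality forces $d(p,q) = \Omega(\lambda R^{L-1})$; second, applying that lemma with $\varepsilon$ replaced by $\varepsilon/2$ gives $\lVert P_0\rVert \le (1+\varepsilon/2)\,d(p,q)$. Plugging in,
$$\frac{\lVert P_1\rVert}{d(p,q)} \;\le\; 1+\frac{\varepsilon}{2}+\frac{C_R R^{L}}{d(p,q)} \;\le\; 1+\frac{\varepsilon}{2}+\frac{C'_R}{\lambda},$$
and choosing $\lambda=\Theta(\varepsilon^{-1})$ makes the right-hand side at most $1+\varepsilon$. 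The main obstacle is bookkeeping the geometric sum of radii against the shortcut-edge length, rather than letting the perturbation grow with the depth of the climb; once that is handled cleanly the calculation mirrors the end of the proof of \cref{lem:initial-stretch} with $\lambda$ inflated by a constant factor.
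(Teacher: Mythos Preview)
Your approach matches the paper's: carry the $\calS_0$-path of \cref{lem:initial-stretch} over to $\calS_1$ by replacing each cluster with its representative, bound the per-edge perturbation via the triangle inequality by the sum of the two endpoint radii, collect the resulting geometric series into an $O(R^L)$ term, and absorb it by enlarging $\lambda$. The paper states this more tersely---the climbing portion is ``at most doubled,'' so the $2$ in the stretch fraction of \cref{lem:initial-stretch} becomes a $4$---but the mechanism is identical.

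One slip to fix: the lemma is stated for \emph{any} representative assignment $\L$, so you are not free to ``choose $\L$'' so that the top-level clusters of $p$ and $q$ map to themselves. For a general $\L$ satisfying \cref{def:leaf-rep}, the endpoints of your walk $P_1$ are $\L(\C_p)$ and $\L(\C_q)$, which need not equal $p$ and $q$, and then $P_1$ is not a $p$--$q$ path in $\calS_1$ at all. The easy repair is to extend $P_0$ further down the implicit chain $(p,l'),(p,l'-1),\ldots$ to a level $j$ with $R^j$ below the minimum pairwise distance in $V$; at that level the only point within $R^j$ of $p$ is $p$ itself, so $\L((p,j))=p$ is forced, and likewise on the $q$-side. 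This adds at most $O(R^{l'})\le O(R^L)$ of extra detour per side, which your $C_R R^L$ slack already absorbs. (The paper's own one-line proof glosses over this endpoint issue as well, so you are not missing anything the paper actually supplies.)
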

\begin{proof}
The proof works in a similar way to the proof of \cref{lem:initial-stretch}. A shortcut edge would still provide a good path between two clusters even after its end points are replaced by their representatives. The path from a $p$ to $p_i$ will be doubled at most since a representative could be as far as a child from the center of a cluster. Therefore, the stretch-factor of the path between $p$ and $q$ will be
$$\frac{4\cdot\frac{R^{l'+i+1}-1}{R-1} + d(p_i,q_i)}{d(p_i,q_i)-4\cdot\frac{R^{l'+i+1}-1}{R-1}}$$
Again, this fraction is less than $1+\varepsilon$ when $\lambda=4(2+\varepsilon)\varepsilon^{-1}R=\mathcal{O}(\varepsilon^{-1})$.
\end{proof}

To construct a bounded-repetition representative assignment we pay attention to the neighbors of lower level copies of a cluster. Let $\C=(p,l)$ be a cluster that we want to find a representative for. As we mentioned before, $(p,l')$ exists in the hierarchy for all $l'<l$. If $l'$ is small enough, i.e. $l' < l - \log_R \lambda$, then the neighbors of $(p,l')$ will be located within a distance $\lambda\cdot R^{l'}=R^l$ of $p$, making them good candidates to be a representative of $\C$. Therefore, having more neighbors on lower levels means having more (potential) representatives on higher levels. This is how we assign the representatives.

We define a chain to be a sequence of clusters with the same center that form a path in $\T$. We divide a chain into blocks of length $\log_R \lambda$. The best way to do this so that maintaining it dynamically is easy is to index the clusters in a chain according to their levels and gather the same indices in the same block. We define the block index of a cluster in a chain to be $\lfloor l/\log_R \lambda\rfloor$, where $l$ is the level of the cluster. The clusters in a chain that have the same index form a block.

The first observation is that if we are given two non-consecutive blocks in the same chain, we can use the neighbors of the lower level block as representatives of the higher level block. This is the key idea to our representative assignment, which we call \emph{next block assignment}. In this assignment, we aim to represent higher level points with lower level points.
Let $p$ be a point and $P_1,P_2,\dotsc,P_k$ be all the blocks of the chain that is centered at $p$ in $\T$, ordered from top to bottom (higher level blocks to lower level blocks). We say a block is empty if the clusters in the block have no neighbors in $\T$. We say the block is non-empty otherwise. We make a linked list $\L_0$ of all the even indexed non-empty blocks, and a separate linked list $\L_1$ for all the odd indexed non-empty blocks. For every element of $\L_0$ we pick an arbitrary neighbor cluster of its block in $\L_0$ (because the blocks are non-empty such neighbors exists), and we assign that neighbor to be the representative of the clusters in that element. More specifically, let $B_i$ be a block in $\L_0$, and let $B_{i+1}$ be the next block in $\L_0$. Let $\C$ be an arbitrary cluster in $B_{i+1}$ that has a neighbor. This cluster exists, since $B_{i+1}$ is a non-empty block. Let $q$ be the center of a neighbor of $\C$. We assign $\L(\C') = q$ for all $\C'\in B_i$. The same approach works for $\L_1$. This assigns a representative to every block in the chain, except the last block in $\L_0$ and $\L_1$. We assign $p$ itself to be the representative of the clusters in these blocks.

Now we show that this assignment has bounded repetition. First, we show that our assignment only assigns lower level points to be representatives of higher level points.

\begin{lemma}\label{lem:rep-order}
Let $p$ and $q$ be two points in the point set and let $\lvl(p) > \lvl(q)$ . In the next block assignment $q$ would never be represented by $p$.
\end{lemma}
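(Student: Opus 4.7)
The plan is to argue by contradiction. I would suppose that $p=\L(\C')$ for some cluster $\C'=(q,l')$ in $q$'s chain even though $\lvl(p)>\lvl(q)$, and extract a contradiction from the separation property applied at level $l'$. Since $\lvl(p)\neq\lvl(q)$ forces $p\neq q$, the fallback rule (where the last blocks of $\L_0$ and $\L_1$ represent themselves by $q$) cannot be the source of the assignment; instead the standard rule must have chosen $p$ as the center of some neighbor $N$ in $\calS_0$ of a cluster $\C=(q,l'')$ in the block $B_{i+1}$ that follows $\C'$'s block $B_i$ in the appropriate same-parity list.

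The first step I would carry out is to extract a strict level gap $l'-l''>\log_R\lambda$ from the block structure. The key observation is that $B_i$ and $B_{i+1}$ share parity, so their block indices differ by at least two and an entire block of opposite parity sits strictly between them. Using that the block of index $j$ occupies levels in $[jL,(j+1)L)$ with $L=\log_R\lambda$, this pigeonholes $l'-l''$ above $L$, hence $R^{l'-l''}>\lambda$.

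The second step is an upper bound on $d(p,q)$: because $N$ is a neighbor of $\C$ in $\calS_0$, a short case split on type-I versus type-II yields $d(p,q)\leq\lambda R^{l''}$, provided $\lambda\geq R$ (which is compatible with the choice $\lambda=\Theta(\varepsilon^{-1})$ forced by \cref{lem:sparse-stretch}). The third step is to invoke the separation property at level $l'$: here I would use that $\lvl(p)>\lvl(q)\geq l'$ implies the (explicit or implicit) cluster $(p,l')$ exists in $\T$, so both $(p,l')$ and $(q,l')=\C'$ are level-$l'$ clusters with distinct centers, and the separation property gives $d(p,q)>R^{l'}$. Combining the three bounds yields $R^{l'}<d(p,q)\leq\lambda R^{l''}=R^{l''+L}<R^{l'}$, which is absurd.

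The delicate point I anticipate is the strict level gap $l'-l''>\log_R\lambda$: it hinges entirely on the parity skip between consecutive non-empty entries of $\L_0$ or $\L_1$, and therefore on the two-list design of the next-block assignment. A single-list variant would only give $l'-l''\geq 1$, which would not suffice. Everything else is routine: once the gap is established and the separation property is noted to apply to implicit copies as well as explicit ones, the three inequalities collide in one line.
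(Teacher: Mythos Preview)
Your proposal is correct and follows essentially the same route as the paper's own proof: argue by contradiction, use the parity skip between $B_i$ and $B_{i+1}$ to force the connection level at least $\log_R\lambda$ below the represented level, bound $d(p,q)$ by $\lambda R^{l''}<R^{l'}$, and contradict the separation property applied to the implicit copy $(p,l')$. Your version is in fact more carefully written than the paper's (you explicitly rule out the fallback self-representation case, split on type~I versus type~II neighbors, and justify the strict inequality in the level gap), but the underlying argument is identical.
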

\begin{proof}
Assume, on the contrary, that $q$ is represented by $p$. Therefore, there exists two same-parity cluster blocks in the chain centered at $q$ that a cluster centered at $p$ is connected to the lower block. Let $\C=(p,l)$ and $\C'=(q,l')$ be the highest clusters centered at $p$ and $q$, respectively. Since the connection between $p$ and $q$ is happening somewhere on the third block or lower on the chain centered at $q$, we can say that $d(p,q)< \lambda\cdot R^{l'-\log_R\lambda} = R^{l'}$. This means that the separation property does not hold for the lower level copy of $\C$, $(p,l')$, and $\C'$, which is a contradiction.
\end{proof}

Now that we proved that points can only represent higher level points in our assignment, we can show the bounded repetition property.

\begin{lemma}[Bounded repetition]\label{lem:rep-bound}
The next block assignment $\L$ described above has bounded repetition.
\end{lemma}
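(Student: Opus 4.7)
The plan is to bound $|\L^{-1}(q)|$ by splitting the clusters that $q$ represents into those lying in $q$'s own chain and those lying in the chains of other points $p$, and to show each part is constant.

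For $q$'s own chain, the construction makes $q$ its own representative only for the last block in $\L_0$ and the last block in $\L_1$, contributing at most $2\log_R \lambda=\mathcal{O}(1)$ clusters.

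For each other point $p$ whose chain contains a cluster represented by $q$, \cref{lem:rep-order} forces $\lvl(p)\geq \lvl(q)$. I plan to show that for each such $p$, $q$ represents at most one block of $\L_0$ and at most one block of $\L_1$ in $p$'s chain. The argument: if $q$ is the representative of some $B_i$, then there is a cluster $(p,l)\in B_{i+1}$ whose type I neighbor is $(q,l)$; this requires both $l\leq \lvl(q)$ (so that $(q,l)$ is an actual cluster) and $d(p,q)\leq \lambda R^l$, while the separation property forces $d(p,q)>R^l$ whenever $(p,l)$ and $(q,l)$ coexist. Combining these bounds confines the feasible $l$ to a consecutive range of length $\log_R \lambda$, which is exactly one block's worth of levels. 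Hence the range overlaps at most two consecutive blocks of $p$'s chain, of opposite parities, leaving at most one eligible candidate for $B_{i+1}$ per parity list, and therefore at most one represented $B_i$ per parity list.

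To bound how many points $p$ can contribute at all, I will observe that any such $p$ satisfies $R^{\lvl(q)}<d(p,q)\leq \lambda R^{\lvl(q)}$ — the upper bound by instantiating $l=\lvl(q)$ in the argument above, and the lower bound by separation of the coexisting clusters $(p,\lvl(q))$ and $(q,\lvl(q))$. Therefore $(p,\lvl(q))$ is a type I neighbor of $(q,\lvl(q))$ in $\calS_0$, and \cref{lem:initial-degree} caps the number of such neighbors by $\mathcal{O}(\varepsilon^{-d})$. Multiplying by the $\mathcal{O}(\log_R \lambda)$ clusters contributed per candidate $p$ and adding the own-chain contribution gives $|\L^{-1}(q)| = \mathcal{O}(\varepsilon^{-d}\log_R \lambda)=\mathcal{O}(1)$.

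The main obstacle I anticipate is the tight geometric coupling between the type I neighbor condition $d(p,q)\leq \lambda R^l$ and the separation bound $d(p,q)>R^l$, which is what forces the feasible levels to fit inside a single block of size $\log_R \lambda$ and to span at most two consecutive blocks. Once that observation is secured, everything else reduces to a direct application of the degree bound in $\calS_0$ and the constant block size.
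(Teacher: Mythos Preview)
Your proof is correct and follows the same skeleton as the paper's: bound the own-chain contribution by the two terminal blocks, invoke \cref{lem:rep-order} to force $\lvl(p)\geq\lvl(q)$ for every contributing chain, and then use the type-I/separation constraints together with a packing argument at level $\lvl(q)$ to cap the number of such chains. Your additional per-chain step (pinning the feasible neighbor levels to an interval of width $\log_R\lambda$, hence at most two consecutive blocks of opposite parity) actually fills in a point the paper glosses over --- the paper only counts the contributing \emph{points} and writes $b=d^{d/2}\lambda^d+2$ without explicitly arguing that each such chain contributes only $\mathcal{O}(\log_R\lambda)$ clusters.
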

\begin{proof}
We show that every point represents at most a constant number of clusters. First, note that the two bottom clusters of the two block linked lists have a constant number of clusters in them (to be exact, $2\log_R\lambda$ clusters maximum). So we just need to show that the number of other clusters that are from other chains and assigned to the point are bounded by a constant. Let $p$ be an arbitrary point and let $\C=(p,l)$ be the highest level cluster centered at $p$. According to the previous lemma, any point $q$ that has a cluster $\C'=(q,l')$ that $\L(\C')=p$ must have a higher level than $p$. Therefore, there exists a lower level copy of $q$ on level $l$. Also, the distance between $p$ and $q$ is bounded by $\lambda\cdot R^{l}$ since $p$ and $q$ are connected on a level no higher than $l$ (remember that we only represent our clusters with their lower level neighbors). Now we can use the packing lemma, since all such points $q$ have a cluster centered at them on level $l$ and therefore separated by a distance of $R^l$. By the packing lemma, the number of such clusters would be bounded by $d^{d/2}\lambda^d$ such points. So the repetition is at most $b=d^{d/2}\lambda^d+2$.
\end{proof}

\begin{corollary}
The spanner $\calS_1$ has bounded degree.
\end{corollary}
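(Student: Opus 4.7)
The plan is to combine the two preceding lemmas in a one-line multiplicative bound. By the definition of $\calS_1$, every edge incident to a point $p$ in $\calS_1$ arises from some edge $(\C,\C')\in\calS_0$ with $\L(\C)=p$ (or $\L(\C')=p$). Therefore the degree of $p$ in $\calS_1$ is upper bounded by the sum over all clusters $\C$ with $\L(\C)=p$ of the degree of $\C$ in $\calS_0$.

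First I would invoke \cref{lem:rep-bound}, which gives that the set $\L^{-1}(p)$ has size at most $b=d^{d/2}\lambda^d+2$, a constant depending only on $\varepsilon$ and $d$ (since $\lambda=\mathcal{O}(\varepsilon^{-1})$). Next I would invoke \cref{lem:initial-degree}, which bounds the degree of each cluster in $\calS_0$ by $\mathcal{O}(\varepsilon^{-d})$. Multiplying these two constants yields a degree bound of $b\cdot\mathcal{O}(\varepsilon^{-d})=\mathcal{O}(\varepsilon^{-2d})$ on every point in $\calS_1$, which is constant for constant $\varepsilon$ and $d$.

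There is essentially no obstacle here, since the two ingredients do all the work; the only minor subtlety is to note that an edge $(\C,\C')\in\calS_0$ contributes to the degree of $p$ in $\calS_1$ only through endpoints for which $p$ is the representative, and that a cluster $\C$ cannot contribute more than its $\calS_0$-degree-many edges no matter how many other clusters share the representative $p$. Hence the bound follows immediately from a union over $\L^{-1}(p)$.
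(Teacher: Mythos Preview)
Your proof is correct and is exactly the argument the paper intends: the corollary is stated without proof precisely because it follows by multiplying the bounded-repetition bound of \cref{lem:rep-bound} with the per-cluster degree bound of \cref{lem:initial-degree}, just as you do. Nothing further is needed.
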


\subsection{Maintaining the spanner}

So far we showed $\calS_1$ has bounded stretch and bounded degree. Here we show that we can maintain $\calS_1$ with $\mathcal{O}(1)$ amortized number of updates after a point insertion and $\mathcal{O}(\log\Delta)$ amortized number of updates after a point deletion. We know how to maintain the hierarchy from earlier in this section. Therefore, we just explain how to update the spanner, which includes maintaining our representative assignments dynamically.

\textbf{Point insertion.} We prove the amortized bound by assigning credits to each node, and using the credit in the future in the case of an expensive operation. Let $D_{max}$ be the degree bound we proved for $\calS_1$. When a new point is added to the spanner, we assign $D_{max}$ credits to it.

We analyze the edge addition and removals that happen after the insertion of a point $p$ in the spanner. Note that although only one explicit cluster is added to $\T$ after the insertion, there might be many new edges between the implicit (lower level) copies of the new cluster and other clusters that existed in $\T$ beforehand. We need to show that these new edges do not cause a lot of changes on the spanner after the representative assignment phase.

First, we analyze the effect of addition of $p$ on points $q$ that $\lvl(p) > \lvl(q)$. Similar to the proof of \cref{lem:rep-order}, we can show that any edges between the chain centered at $p$ and the chain centered at $q$ will be connected to the top two cluster blocks of the chain centered at $q$. This means that these edges will have no effect on the assignment of other clusters in the chain centered at $q$, because each non-empty block is represented by some neighbor of the next non-empty same-parity block, and the first two blocks, whether they are empty or not, will not have any effect on the rest of the assignment. Therefore, no changes will occur on the representatives of $q$ and therefore the edges that connect these representatives together will remain unchanged.

The addition of $p$ as we mentioned, would cause the addition of some edges in the spanner $\calS_1$, that we pay for using the constant amount of credit stored on the endpoints of those edges. Therefore, we are not spending more than constant amount of amortized update for this case.

Second, we analyze the effect of addition of $p$ on points $q$ that $\lvl(p) \leq \lvl{q}$. The outcome is different in this case. Similar to the previous case we can argue that any edge between the chain centered at $p$ and the chain centered at $q$ must be connected to the top two blocks of the chain centered at $p$, but they could be connected to anywhere relative to the highest cluster centered at $q$. This means that they could add a non-empty block in the middle of the chain centered at $q$. If this happens, then the assignment of the previous non-empty same-parity block changes and also the new non-empty block will have its own assignment. This translates into a constant number of changes (edge additions and removals) on the spanner $\calS_1$ per such point $q$. We earlier in \cref{lem:rep-bound} proved that there is at most a constant number of such clusters. This shows that there would be at most a constant number of changes on the spanner $\calS_1$ from higher level points.

Finally, we can conclude that overall the amortized recourse for insertion is bounded by a constant, since in the first case we could pay for the changes using the existing credits, and in the second case we could pay for the changes from our pocket.

\textbf{Point deletion.} After a point deletion, all the clusters centered at that point will be removed from the hierarchy, and a set of replication to higher levels would happen to some clusters to fix the hierarchy after the removal. It is easy to see that the number of cluster changes (including removal and replication) would be bounded by a constant. Each cluster change would also cause a constant number of changes on the edges of the spanner $\calS_0$. Note that a cluster removal can introduce an empty block to at most a constant number of higher level points and a cluster replication can also introduce an empty block to at most a constant number of higher level points. Therefore, the changes on the representative assignments would be bounded by a constant after a single cluster update. Since we have at most $\mathcal{O}(\log\Delta)$ levels in the hierarchy, each of which having at most a constant number of cluster updates, overall we would have at most $\mathcal{O}(\log\Delta)$ number of edge changes on $\calS_1$. After the removal, we assign full $D_{max}$ credit to any node that is impacted by the removal. This would make sure we have enough credits for the future additions.

\section{Light spanner}%
\label{sec:light}%

So far we introduced our hierarchy and how to maintain it under point insertions and removals, and also how to create a spanner on top of the hierarchy and how to make it sparse with representative assignments. We also studied how our sparse spanner changes under point insertions and point removals and we bounded the amortized number of updates per insertion to a constant, and the bound for the number of updates per deletion to $\mathcal{O}(\log\Delta)$.

In this section, we introduce our techniques for maintaining a light spanner that has a constant lightness bound on top of all the properties we had so far. In our main result in this section we show that maintaining the lightness in our case is not particularly harder than maintaining the sparsity, meaning that it would not require asymptotically more changes than a sparse spanner would.

We ultimately want to select a subset of the edges of our sparse spanner that are light and preserve the bounded stretch to achieve a bounded degree light spanner. For this purpose, we introduce a set of maintenance updates that we perform after point insertion and removal. These maintenance updates aim to improve the weight of the spanner in iterations. In each iteration, we look at all the edge buckets of our spanner, and we search for an edge that does not satisfy the leapfrog property for certain constants. If no such edge is found in the buckets then the spanner's lightness is already bounded by a constant. If found, such an edge would be deleted from the bucket and removed from the spanner. The removal of this edge could cause the bounded stretch property to not hold for some other pairs. We take one such pair and we add the edge between the two points. Now the addition of the new edge could cause the appearance of some pairs that violate the leapfrog property and therefore increase the lightness. We then repeat this loop of removal and addition again until we reach a bounded lightness bounded stretch spanner. We show in this section that the number of iterations we need to reach a bounded degree bounded stretch spanner is proportional to the number of edges that we changed since our last bounded degree bounded stretch state. Therefore, if we only change an amortized constant number of edges to reach a state, then an amortized constant number of updates would be enough to make that state stable again.

We first analyze the effect of point insertion or deletion on the potential functions we defined earlier in \cref{sec:prelim}. Then we introduce our maintenance updates and we show our bounds on the recourse of a light spanner.

\subsection{Bounding the potential function}%
\label{sec:potential-function}%
In this section we analyze the behavior of our potential functions, after a point insertion and a point deletion. These bounds will later help us prove the amortized bounds on the recourse. As we defined in \cref{sec:prelim}, the potential function $p_i(u,v)$ on a potential pair $(u,v)$ in a bucket $S_i$ is equal to
$$
p_i(u,v)=\begin{cases}
(1+\varepsilon)-d_i^*(u,v)/d(u,v) & \text{if $(u,v)\in S_i$}\\
C_\phi\cdot\left(d_i^*(u,v)/d(u,v)-(1+\varepsilon')\right) & \text{if $(u,v)\notin S_i$ and $\ind(u,v)=i$}
\end{cases}
$$
And the overall potential function on a bucket is defined as
$$\Phi_i=\sum_{(u,v)\in \calP_i\cup S_i}p_i(u,v)$$
where $\calP_i$ is the set of potential pairs with index $i$. And we defined a potential function on the whole spanner as
$$\Phi = \sum_i \Phi_i$$

\textbf{Single edge update.} We start with a simple case of bounding the potential function after a single edge insertion, then we consider a single edge deletion, and finally we extend our results to point insertions and deletions. We assume the pair that we insert to or delete from the spanner is an arbitrary pair from the set of potential pairs, because we only deal with potential pairs in our light spanner.

First, we consider a single edge insertion. We divide the analysis into two parts: the effect of the insertion of the potential pair onto the same level potential pairs, and the effect of the insertion onto higher level potential pairs. Recall that the level of a pair was defined in \cref{sec:prelim}.

We show that the edges of the same level satisfy a separation property, meaning that two edges in the same bucket cannot have both their endpoints close to each other.
\begin{lemma}[Edge separation]%
\label{lem:edge-separation}%
Let $(u,w)$ and $(y,z)$ be two potential pairs in the same bucket. Assuming that $(u,w)$ and $(y,z)$ are not representing clusters from the same pair of chains in $\T$,
$$\max\{d(u,y),d(w,z)\}>\frac{1}{\lambda^2\cdot c}\max\{d(u,w),d(y,z)\}$$
\end{lemma}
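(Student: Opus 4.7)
The plan is to argue by contradiction: assume $d(u,y)$ and $d(w,z)$ are both at most $M/(\lambda^2c)$, where $M=\max\{d(u,w),d(y,z)\}$, and aim to show $\{p_u,p_w\}=\{p_y,p_z\}$, where $(u,w)=(\L(\C_u),\L(\C_w))$ comes from an $\calS_0$ edge $(\C_u,\C_w)$ with centers $p_u,p_w$ at levels $l_u,l_w$, and $(y,z)$ from $(\C_y,\C_z)$ analogously.

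First I pin down the underlying cluster-level of each pair. For a Type I edge $(\C_u,\C_w)$ at cluster-level $l$, the separation property and the Type I adjacency condition give $R^l<d(p_u,p_w)\le\lambda R^l$, and since the next-block assignment displaces each endpoint by less than $R^l$ from its chain center, $d(u,w)\le(\lambda+2)R^l$. In the main case where the two pairs share bucket-size (lengths within a factor $c$ of each other), this forces $R^{l_u},R^{l_y}\ge M/((\lambda+2)c)$. The Type II case replaces $(\lambda+2)R^l$ by $3R^l$ and goes through verbatim, and the "same bucket but different size" regime (ratio $\ge C$) is nearly immediate because then one of $d(u,y),d(w,z)$ is forced to be comparable to $M$ by the triangle inequality.

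Second I exploit the two-sided control on representative shift that the next-block rule provides: whenever $\L(\C)\ne\cen(\C)$, $\L(\C)$ is the center of a Type I neighbor of $(\cen(\C),l^*)$ for some $l^*\le l-\log_R\lambda$, so the shift lies in $(R^{l^*},\lambda R^{l^*}]$ and $\L(\C)$ itself has an explicit cluster at level $l^*$ (and implicit clusters below). I then case-split on the shift. If both $d(u,p_u)$ and $d(y,p_y)$ are below $R^{l_u}/(2\lambda)$, the triangle inequality with $d(u,y)\le M/(\lambda^2c)$ pushes $d(p_u,p_y)$ below $R^{\min(l_u,l_y)}$, so the separation property on chain centers forces $p_u=p_y$; a symmetric argument on the other endpoints collapses both chains. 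Otherwise some shift exceeds $R^{l_u}/(2\lambda)$, and the two-sided control forces the corresponding representative's own cluster-level $l_u^*$ to satisfy $R^{l_u^*}\gtrsim R^{l_u}/\lambda$; invoking the separation property for $u$ and $y$ at level $\min(l_u^*,l_y^*)$ then gives $d(u,y)>R^{l_u^*}\gtrsim M/(\lambda^2(\lambda+2)c)$, which contradicts the assumption once the $\lambda+2\approx\lambda$ constants are absorbed into the choice of $\lambda$.

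The main obstacle will be that the underlying cluster-level is only loosely tied to $M$ — the bound $R^l\ge M/((\lambda+2)c)$ is tight but there is no useful upper bound — so the correct separation to invoke must be chosen dynamically, based on the representative's own cluster-level rather than on $\cen(\C)$. A secondary subtlety is the empty-block case, where the "next" non-empty block sits several blocks below $\C$: this strictly shrinks the shift but also strictly lowers the representative's own cluster-level, and I will need to verify that the two effects stay balanced against the threshold $R^{l_u}/(2\lambda)$ used in the case split. Handling the Type II endpoints at mismatched levels amounts only to tracking $\max(l_u,l_w)$ in place of $l$ throughout, which does not alter the structure of either branch.
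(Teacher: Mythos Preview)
The paper's argument is much shorter and rests on one observation you never make: it treats $u,w,y,z$ as the \emph{cluster centers themselves}, so that a Type~I potential pair at hierarchy level $l$ has length in the two-sided range $(R^l,\lambda R^l]$. With both an upper and a lower bound on $\lVert uw\rVert$ in terms of $R^l$, the same-bucket hypothesis $\lVert uw\rVert/\lVert yz\rVert<c$ (taking $l\ge l'$) immediately forces $R^{l-l'}<\lambda c$, i.e.\ the two cluster levels differ by at most a constant. One application of the separation property at the lower level $l'$---which is legitimate because the ``not the same pair of chains'' hypothesis guarantees that at least one of $p_u\neq p_y$ or $p_w\neq p_z$ holds---then gives $\max\{d(u,y),d(w,z)\}>R^{l'}>R^l/(\lambda c)\ge d(u,w)/(\lambda^2 c)$. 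There is no contradiction argument, no case split on shifts, and no appeal to the representative's own cluster level.

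Your plan, by contrast, works with representatives and so loses the lower bound $d(u,w)>R^{l_u}$; as you yourself note, this leaves you with only $R^{l_u},R^{l_y}\ge M/((\lambda+2)c)$ and no upper bound on either cluster level. That gap is fatal to your small-shift branch: if $l_u\gg l_y$, the threshold $R^{l_u}/(2\lambda)$ can exceed $R^{l_y}$ by an arbitrary factor, and then the triangle-inequality estimate $d(p_u,p_y)\le R^{l_u}/\lambda+M/(\lambda^2c)$ gives no contradiction with separation at level $\min(l_u,l_y)=l_y$. Your stated workaround---invoke separation at the representative's own level---is only deployed in the large-shift branch, and even there it yields $M/(\lambda^2(\lambda+2)c)$ rather than the stated $M/(\lambda^2 c)$; ``absorbing $\lambda+2$ into $\lambda$'' changes the statement you are proving. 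The missing idea is precisely the paper's: once you know the two hierarchy levels are within $\log_R(\lambda c)$ of each other (which requires the center-level lower bound on edge length), a single direct separation inequality suffices and no case analysis is needed.
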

\begin{proof}
Note that the constraint on not connecting the same pair of chains in the lemma is necessary, because in our sparse spanner construction, it is possible that two points are connected on two different levels on two different pairs of clusters. These two edges could potentially go into different non-empty blocks and get assigned different representatives and cause two parallel edges between two neighborhoods. While this is fine with sparsity purposes as long as there is at most a constant number of such parallel edges, we do not want to have them in our light spanner since they will make the analysis harder. Therefore, we assume that the edges are not connecting clusters centered at the same pair of points.

Next we show that these two pairs are from two cluster levels that are not far from each other. Let $(u,w)$ be an edge on level $l$ of the hierarchy and $(y,z)$ be an edge on level $l'$ of the hierarchy. Without loss of generality, assume that $l\geq l'$. We know that the potential pairs connect same level clusters together. Therefore, the length of $(u,w)$ could vary between $R^l$ and $\lambda\cdot R^l$. A similar inequality holds for $(y,z)$. Thus the ratio of the length of the two would be at least $\lambda^{-1}R^{l-l'}$. Also, if $C$ is chosen large enough it is clear that the two edges must have the same index as well, otherwise the length ratio of $C$ between the two edges would make their endpoints very far from each other. Thus, the edges belong to the same bucket and index, meaning that the length of their ratio is at most $c$. So,
$$\lambda^{-1}R^{l-l'} < c$$
Now, the separation property on level $l'$ between the clusters that these two edges are connecting to each other states that
$$\max\{d(u,y),d(w,z)\}\geq R^{l'}>\frac{R^l}{\lambda\cdot c}$$
Also according to earlier in this proof, $R^l\geq d(u,w)/\lambda$. Thus,
$$\max\{d(u,y),d(w,z)\}>\frac{d(u,w)}{\lambda^2\cdot c}=\frac{1}{\lambda^2\cdot c}\max\{d(u,w),d(y,z)\}$$
\end{proof}

Now using this lemma we show that the insertion of a potential pair will not cause any violations of Invariant 2 on the same level.

\begin{lemma}%
\label{lem:single-ins-same}%
Let $(u,w)$ be a potential pair that is inserted to $S_i$ where $i=\ind(u,w)$. If $d_i^*(u,w)>(1+\varepsilon')d(u,w)$, then the insertion of $(u,w)$ results in no violations of Invariant 2 on same or lower level edges, assuming that $c^{-1}(1+\lambda^{-2})\geq 1+\varepsilon'$.
\end{lemma}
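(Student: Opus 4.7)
The plan is a case split on $\lvl(y,z)$ versus $\lvl(u,w)$ for the pair $(y,z) \in S_i$ whose status we must check.

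\textbf{Lower-level subcase ($\lvl(y,z) < \lvl(u,w)$).} By the definition of an extended path, every extended path for $(y,z)$ uses edges of size strictly below $\lvl(y,z)$, hence strictly below $\lvl(u,w)$. The newly inserted edge $(u,w)$ is therefore ineligible, so $d_i^*(y,z)$ is unchanged and Invariant~2 is automatically preserved. The same observation applied to $(u,w)$ itself shows that $d_i^*(u,w)$ is also unaffected by the insertion, and the hypothesis $d_i^*(u,w) > (1+\varepsilon')d(u,w)$ hands us Invariant~2 for the new edge directly.

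\textbf{Same-level subcase ($\lvl(y,z) = \lvl(u,w)$).} I would show that any path $P$ from $y$ to $z$ in $S_i^*$ that traverses $(u,w)$ has length greater than $(1+\varepsilon')d(y,z)$, so it cannot produce a fresh violation of Invariant~2. After relabeling, $P$ decomposes as $y \rightsquigarrow u \to w \rightsquigarrow z$. Since every edge weight in $S_i^*$ is at least its Euclidean length, the triangle inequality applied to the two sub-paths gives
\[
\mathrm{length}(P) \;\geq\; d(y,u) + d(u,w) + d(w,z).
\]
Invoking the edge-separation \cref{lem:edge-separation} on the same-bucket pairs $(u,w)$ and $(y,z)$, and using that two same-index, same-size edges differ in length by less than a factor of $c$ (so $d(u,w) > d(y,z)/c$), one obtains
\[
\mathrm{length}(P) \;\geq\; \max\{d(y,u), d(w,z)\} + d(u,w) \;>\; \frac{d(y,z)}{\lambda^{2}c} + \frac{d(y,z)}{c} \;=\; \frac{1+\lambda^{-2}}{c}\,d(y,z) \;\geq\; (1+\varepsilon')\,d(y,z),
\]
the last inequality being exactly the assumed $c^{-1}(1+\lambda^{-2}) \geq 1+\varepsilon'$. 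Hence no path through $(u,w)$ can drive $d_i^*(y,z)$ down to $(1+\varepsilon')d(y,z)$, and Invariant~2 survives for $(y,z)$.

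\textbf{Main obstacle.} The only delicate point is the side condition of \cref{lem:edge-separation}, which excludes pairs representing the same pair of chains in $\mathcal{T}$. I would dispose of this by appealing to the maintenance rule (spelled out when the light spanner is constructed) that forbids keeping more than one edge per pair of chains in a given bucket, so no two such parallel potential pairs are simultaneously present at the moment of an insertion.
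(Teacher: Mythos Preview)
Your proposal is correct and follows essentially the same route as the paper: the same case split on $\lvl(y,z)$ versus $\lvl(u,w)$, the same invocation of the edge-separation \cref{lem:edge-separation} in the same-level case, and the same arithmetic leading to $c^{-1}(1+\lambda^{-2})\geq 1+\varepsilon'$. Your observation about the side condition of \cref{lem:edge-separation} (parallel edges between the same pair of chains) is also handled in the paper in exactly the way you suggest, namely by excluding such duplicates from the light spanner.
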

\begin{proof}
It is clear that $(u,w)$ cannot participate in a shortest-path (in $S_i^*$) for any of the lower level pairs, so adding it does not affect any of those pairs. Also adding $(u,w)$ would not violate Invariant 2 for the pair itself because of the assumption $d_i^*(u,w)>(1+\varepsilon')d(u,w)$. Thus we only need to analyze the other same level edges.

So let $(y,z)$ be a same-level edge in $S_i$. If one of $(u,w)$ or $(y,z)$ use the other one in its shortest extended path (in $S_i^*$), then by \cref{lem:edge-separation}, the length of the path would be at least
$$\min\{d(u,w),d(y,z)\}+\max\{d(u,y),d(w,z)\}>\min\{d(u,w),d(y,z)\}+\frac{1}{\lambda^2\cdot c}\max\{d(u,w),d(y,z)\}$$
We also know, from the assumption, that $(u,w)$ and $(y,z)$ are same-level edges in $S_i$, so $c^{-1}<d(u,w)/d(y,z)<c$. Therefore, the stretch of the path would be at least
$$\frac{\min\{d(u,w),d(y,z)\}+\max\{d(u,y),d(w,z)\}}{\max\{d(u,w),d(y,z)\}}>c^{-1}(1+\lambda^{-2})\geq1+\varepsilon'$$
Thus the stretch of the path is more than $1+\varepsilon'$, which shows that this addition would not violate Invariant 2 for any of the two pairs,
even though the paths of same level edges are excluded in $d_i^*(u,w)$.
\end{proof}

Note that satisfying the condition in \cref{lem:single-ins-same} is easy. We first choose large enough $\lambda$ to have a fine hierarchy, then we choose $c$ small enough that $c < 1+\lambda^{-2}$, then we choose $\varepsilon'=c^{-1}(1+\lambda^{-2})-1$. Now we show that the potential change on higher level potential pairs would be bounded by a constant after the insertion of $(u,w)$.

\begin{lemma}%
\label{lem:single-ins-high}%
Let $(u,w)$ be a potential pair that is inserted to $S_i$ where $i=\ind(u,w)$. The insertion of $(u,w)$ results in at most
$$\frac{C_3}{c^k-1}$$
potential increase on higher level potential pairs in $S_i$, where
$$C_3=\varepsilon(1+\varepsilon)^dc^{d+1}C_1$$
is a constant (and $k$ is the number of buckets).
\end{lemma}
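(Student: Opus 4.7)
The plan is to determine which potential pairs can have their potential raised by inserting $(u,w)$, bound the per-pair increase, count the affected pairs at each higher level via the edge packing lemma, and sum the resulting geometric series.

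First I observe that inserting $(u,w)$ into $S_i$ only reduces the weight of a single edge in $S_i^*$ (from $(1+\varepsilon)\lVert uw\rVert$ down to $\lVert uw\rVert$), so $d_i^*(y,z)$ cannot increase for any pair. Consulting the definition of $p_i$, a drop in $d_i^*$ raises the potential of pairs $(y,z)\in S_i$ but lowers the potential of pairs $(y,z)\in \calP_i\setminus S_i$; only the former can contribute positively, so I restrict attention to $(y,z)\in S_i$ with $\lvl(y,z)>j_0:=\lvl(u,w)$. For each such pair, the new shortest extended path either does not traverse $(u,w)$ (in which case $d_i^*(y,z)$ is unchanged) or traverses it exactly once, and in the latter case the old cost of this same path exceeds its new cost by exactly $\varepsilon\lVert uw\rVert$. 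Comparing with the old shortest path yields $d_i^{*,\mathrm{old}}(y,z)-d_i^{*,\mathrm{new}}(y,z)\le \varepsilon\lVert uw\rVert$, so $p_i(y,z)$ rises by at most $\varepsilon\lVert uw\rVert/d(y,z)$.

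Next I count the number $N_j$ of affected pairs in $S_i$ at each level $j>j_0$. If $(y,z)\in S_i$ is affected, the new shortest extended path through $(u,w)$ has cost at most the old $d_i^*(y,z)=\mathcal{O}(d(y,z))$ (by the spanner's stretch bound), so $d(y,u)+d(z,w)$ is $\mathcal{O}(c^{kj+i+1})$. Both endpoints of every level-$j$ affected edge therefore lie inside a single ball of radius $\mathcal{O}((1+\varepsilon)c^{kj+i+1})$ centred near $(u+w)/2$, while the smallest such edge has length at least $c^{kj+i}$. Since these edges all belong to $S_i$, they satisfy Invariant~2, so \cref{lem:edge-packing} caps $N_j$ by a constant that does not depend on $j$.

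Finally, using $\lVert uw\rVert<c^{kj_0+i+1}$ and $d(y,z)\geq c^{kj+i}$ for level-$j$ pairs, I sum the per-pair contributions:
\[
\sum_{j>j_0}N_j\cdot\frac{\varepsilon\lVert uw\rVert}{c^{kj+i}}\leq N\cdot\varepsilon c\sum_{j>j_0}c^{-k(j-j_0)}=\frac{N\cdot\varepsilon c}{c^k-1},
\]
where $N$ is the uniform per-level packing bound from the previous step, and the geometric series in $c^{-k}$ telescopes to $1/(c^k-1)$, producing exactly the denominator in the claim. The main obstacle is sharpening the packing step so that the final constant matches the stated $C_3=\varepsilon(1+\varepsilon)^dc^{d+1}C_1$ rather than the looser bound that a direct application of \cref{lem:edge-packing} gives; this will likely require exploiting that an affected edge $(y,z)$ must be approximately parallel to $(u,w)$, which collapses one of the two $(R/r)^d$ factors in the edge-packing argument into a single factor of that form.
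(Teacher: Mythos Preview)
Your approach is essentially the same as the paper's: bound the per-pair potential increase by $\varepsilon\lVert uw\rVert/d(y,z)$, count affected edges at each higher level via \cref{lem:edge-packing}, and sum the geometric series in $c^{-k}$. The paper carries out exactly these three steps in the same order.

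Your only departure is the final paragraph, where you worry that a direct application of \cref{lem:edge-packing} yields $((1+\varepsilon)c)^{2d}C_1$ rather than the stated $(1+\varepsilon)^dc^dC_1$, and propose a parallelism argument to collapse one factor. This concern is misplaced: the paper does \emph{not} perform any such refinement. It simply observes that both endpoints of an affected level-$j'$ edge lie in $B(u,(1+\varepsilon)cr)$ where $r$ is the minimum level-$j'$ edge length, applies \cref{lem:edge-packing} with $R/r=(1+\varepsilon)c$, and writes down $C_2=(1+\varepsilon)^dc^dC_1$. You are correct that this exponent does not literally match the $(R/r)^{2d}$ in \cref{lem:edge-packing}; this appears to be a minor arithmetic slip in the paper's constant. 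But since the lemma is only ever used to assert that the potential increase is at most a constant divided by $c^k-1$, the precise value of $C_3$ is immaterial, and no additional geometric argument is needed. Drop the parallelism idea and your proof is complete and matches the paper's.
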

\begin{proof}
Let $(y,z)$ be an edge of level $j'>j$ in $S_i$ whose $d_i^*$ is decreased by the addition of $(u,w)$. Thus the shortest extended path between $y$ and $z$ in $S_i^*$ passes through $(u,w)$. Denote this path by $P_i^*(y,z)$. Before the addition of $(u,w)$, the length of the same path in $S_i^*$ was at most $\lVert P_i^*(y,z)\rVert + \varepsilon d(u,w)$. Hence, $\Delta d_i^*(y,z)\geq -\varepsilon d(u,w)$, and the potential change of this edge would be
$$\Delta p_i(y,z)=\frac{-\Delta d_i^*(y,z)}{d(y,z)}\leq \frac{\varepsilon d(u,w)}{d(y,z)}\leq \varepsilon c^{k(j-j')+1}$$

In the next step, we bound the number of such $(y,z)$ pairs. Let $r$ be the minimum length of such edge in level $j'$. Both $y$ and $z$ must be within $(1+\varepsilon)cr$ Euclidean distance of $u$ (and $w$), otherwise the edge $(u,w)$ would be useless in $(y,z)$'s shortest path in $S_i^*$. Thus, all such pairs are located in a ball $B(u,(1+\varepsilon)cr)$, and according to \cref{lem:edge-packing}, there would be at most
$$C_2 = (1+\varepsilon)^dc^dC_1$$
number of them.

Thus, the overall potential change on level $j'$ would be upper bounded by $C_2 \varepsilon c^{k(j-j')+1}$. Summing this up over $j'>j$, the overall potential change on higher level pairs would be at most
$$\Delta \Phi_i < \sum_{j'>j} \varepsilon C_2 c^{k(j-j')+1}=\frac{C_3}{c^k-1}$$
where $C_3=\varepsilon C_2c$.
\end{proof}

Now we analyze the removal of a potential pair from a bucket. The difference with the removal is that it could cause violations of Invariant 1 on its level. Therefore, we analyze a removal, together with some subsequent edge insertions that fix any violations of Invariant 1 on the same level.

\begin{definition}[Edge removal process]
Let $(u,w)$ be a potential pair that is located in $S_i$ where $i=\ind(u,w)$. We define the single edge removal process on $(u,w)$ to be the process that deletes $(u,w)$ from $S_i$ and fixes the subsequent violations of Invariant 1 on the same level by greedily picking a violating pair, and connecting its endpoints in $S_i$, until no violating pair for Invariant 1 is left.
\end{definition}

We analyze the effect of the edge removal process in the following two lemmas,
\begin{lemma}%
\label{lem:single-rem-same}%
Let $(u,w)$ be a potential pair that does not violate Invariant 1 ($d_i^*(u,w)<(1+\varepsilon)d(u,w)$) and is deleted from $S_i$ ($i=\ind(u,w)$), using the edge removal process. The deletion of $(u,w)$ together with these subsequent insertions results in no violations of Invariant 1 or Invariant 2 on same or lower level edges, assuming that $c^{-1}(1+\lambda^{-2})\geq 1+\varepsilon'$.
\end{lemma}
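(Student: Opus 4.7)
The plan is to exploit the fact that every edge touched by the process sits at a single size level $j:=\lvl(u,w)$. Since $d_i^*(y,z)$ is defined using only extended paths whose edges have level strictly less than $\lvl(y,z)$, modifying edges at level $j$ cannot change $d_i^*$ for any pair at level $\leq j$. This single observation drives the whole argument, and I would open the proof by recording it and then splitting into the lower-level case and the same-level case.

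For the lower-level case the conclusion is immediate: any pair $(y,z)$ with $\lvl(y,z)<j$ has its shortest extended path taken over edges of level $<\lvl(y,z)<j$, so it sees neither the deletion of $(u,w)$ nor any of the level-$j$ insertions. Both $d_i^*(y,z)$ and the membership status of $(y,z)$ in $S_i$ are unchanged, so Invariants 1 and 2 continue to hold there exactly as they did before.

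For the same-level case I would handle the two invariants separately. Invariant 1 holds for $(u,w)$ right after its removal by the hypothesis $d_i^*(u,w)<(1+\varepsilon)d(u,w)$ together with the fact that $d_i^*(u,w)$ is unaffected. For every other same-level pair, a step of the process either does nothing or moves the pair into $S_i$, where Invariant 1 no longer applies. Since same-level $d_i^*$ values are frozen throughout the process, no insertion can create a new violator, so the violator count strictly decreases and the process terminates with no same-level violation of Invariant 1. For Invariant 2, the initial deletion only removes $(u,w)$ from the set of pairs required to satisfy it and leaves every other level-$j$ edge's $d_i^*$ unchanged. Each subsequent insertion is of a pair $(y,z)$ that currently violates Invariant 1, so at the moment of insertion $d_i^*(y,z)\geq(1+\varepsilon)d(y,z)>(1+\varepsilon')d(y,z)$, which is exactly the hypothesis of \cref{lem:single-ins-same}; since the side condition $c^{-1}(1+\lambda^{-2})\geq 1+\varepsilon'$ is inherited from the present statement, that lemma tells us this insertion creates no Invariant 2 violation at level $\leq j$. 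Chaining these step-by-step applications closes the argument.

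The subtle point, and the one place I would slow down, is verifying that the hypothesis of \cref{lem:single-ins-same} keeps holding at every insertion, not just the first. This is exactly what the observation about level-$j$ modifications being invisible at level $\leq j$ buys us: earlier insertions in the process cannot have shortened $d_i^*(y,z)$, so a pair that is currently chosen as an Invariant 1 violator is genuinely a violator inherited from the state immediately after the removal of $(u,w)$, and any such violator automatically satisfies the strictly weaker bound $d_i^*>(1+\varepsilon')d$ required by \cref{lem:single-ins-same}.
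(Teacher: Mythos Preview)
Your proposal is correct and follows essentially the same approach as the paper: the lower-level case is handled by noting that $(u,w)$ cannot appear in those pairs' extended paths, same-level Invariant~1 is guaranteed by the definition of the edge removal process, and same-level Invariant~2 is preserved by invoking \cref{lem:single-ins-same} on each inserted pair since it is an Invariant~1 violator and hence satisfies $d_i^*>(1+\varepsilon')d$. Your write-up is more careful than the paper's on two points the paper leaves implicit---termination of the process and the fact that earlier level-$j$ insertions cannot spoil the hypothesis of \cref{lem:single-ins-same} for later ones---but these are elaborations of the same argument rather than a different route.
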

\begin{proof}
It is clear that $(u,w)$ cannot participate in a shortest-path (in $S_i^*$) for any of the lower level pairs, so deleting it does not affect any of those pairs. Also, every same level pair that violates Invariant 1 is fixed after the insertion of subsequent edges. Therefore, we just need to show there are no violations of Invariant 2 after these changes. This is also clear by \cref{lem:single-ins-same}, because we are only inserting edges $(y,z)$ that that violate Invariant 1, i.e. $d_i^*(y,z)>(1+\varepsilon)d(y,z)>(1+\varepsilon')d(y,z)$, meaning that the assumption of the lemma holds in this insertion.
\end{proof}

We show a similar bound as edge insertion on the effect of the edge removal process on higher level pairs.
\begin{lemma}%
\label{lem:single-rem-high}%
Let $(u,w)$ be a potential pair that is deleted from to $S_i$ where $i=\ind(u,w)$. The edge removal process on $(u,w)$ results in at most 
$$\frac{C_5}{c^k-1}$$
potential increase on higher level potential pairs in $S_i$, for some constant $C_5$ that depends on $\varepsilon$, $\varepsilon'$, and $c$.
is a constant.
\end{lemma}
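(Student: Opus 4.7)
The plan is to mirror the argument of \cref{lem:single-ins-high} and split the potential accounting into (i) the direct effect of deleting $(u,w)$ and (ii) the cumulative effect of the subsequent insertions performed during the edge removal process.

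For (i), removing $(u,w)$ from $S_i$ raises its weight in $S_i^*$ from $\lVert uw\rVert$ to $(1+\varepsilon)\lVert uw\rVert$, so for any potential pair $(y,z)$ at level $j'>j:=\lvl(u,w)$ whose shortest extended path used $(u,w)$, $d_i^*(y,z)$ increases by at most $\varepsilon\lVert uw\rVert$. Pairs $(y,z)\in S_i$ then see $p_i(y,z)=(1+\varepsilon)-d_i^*(y,z)/\lVert yz\rVert$ only decrease and contribute nothing; pairs $(y,z)\notin S_i$ contribute at most $C_\phi\varepsilon\lVert uw\rVert/\lVert yz\rVert$ each. Applying \cref{lem:edge-packing} in a ball of radius $O((1+\varepsilon)c\lVert yz\rVert)$ around $u$ bounds the number of affected pairs per level $j'$ by $C_2$, and since $\lVert uw\rVert/\lVert yz\rVert\le c^{k(j-j')+1}$, summing geometrically over $j'>j$ produces a direct contribution of order $\frac{1}{c^k-1}$.

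For (ii), each subsequent insertion of an edge $(y',z')$ at some level $j'>j$ drops its $S_i^*$-weight by $\varepsilon\lVert y'z'\rVert$, which can only decrease $d_i^*$ at higher levels. The only higher-level potentials that rise from such a step are those of pairs $(y'',z'')\in S_i$ at levels $j''>j'$, and applying \cref{lem:single-ins-high} directly to this single insertion bounds its contribution to them by $\frac{C_3}{c^k-1}$.

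The hardest part will be aggregating the insertion contributions. Per level there are at most $C_2$ candidate violating pairs by \cref{lem:edge-packing}, but in principle insertions may occur at many levels $j'>j$, so naïvely summing $C_2\cdot C_3/(c^k-1)$ over levels would diverge. The resolution is to open the double sum over (insertion level $j'$, target level $j''$): its per-pair contribution carries an additional factor $c^{k(j'-j'')+1}$, and swapping the order of summation makes both inner and outer sums geometric, so the whole expression collapses to $O(1/(c^k-1))$. An equivalent charging argument is to assign each insertion the $\varepsilon\lVert uw\rVert$ of $d_i^*$-slack the removal created at $(y',z')$ and note that this slack itself decays as $c^{k(j-j')}$ across levels, so the total insertion cost telescopes to the same bound. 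Combining (i) and (ii) yields $\frac{C_5}{c^k-1}$ with $C_5$ depending only on $\varepsilon,\varepsilon',c,\lambda,d,C_\phi$.
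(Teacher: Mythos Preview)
You have misread the definition of the edge removal process. It inserts edges only at the \emph{same} level $j=\lvl(u,w)$, not at higher levels: the process ``fixes the subsequent violations of Invariant~1 on the same level.'' Your part~(ii) therefore starts from a wrong premise, and the double-sum and telescoping machinery you set up to handle insertions spread over many levels $j'>j$ is addressing a difficulty that does not exist.

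The paper's argument for (ii) is correspondingly much simpler. All inserted pairs $(y,z)$ lie at level $j$; each one violated Invariant~1 only because its shortest extended path used $(u,w)$, so both endpoints lie within distance $c(1+\varepsilon)d(u,w)$ of $u$. After insertion these edges satisfy Invariant~2 (by \cref{lem:single-rem-same}), so \cref{lem:edge-packing} bounds their number by a single constant $C_4$. Each insertion then contributes at most $C_3/(c^k-1)$ to higher-level potentials by \cref{lem:single-ins-high}, giving $C_5=C_3(C_4+1)$ directly. Your part~(i) is essentially correct and in fact slightly more careful than the paper's, since you track the $C_\phi$ factor on pairs $(y,z)\notin S_i$; but once you correct the level of the insertions, the ``hardest part'' you identify simply disappears.
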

\begin{proof}
The edge removal process can be divided into two phases. The deletion of $(u,w)$, and the insertion of the subsequent pairs. First, we show that the potential increase after the edge deletion is bounded. Let $(y,z)$ be an edge of level $j'>j$ in $S_i$ whose $d_i^*$ is increase by the deletion of $(u,w)$. Thus the shortest extended path between $y$ and $z$ in $S_i^*$ passes through $(u,w)$. Denote this path by $P_i^*(y,z)$. After the removal of $(u,w)$, the length of the same path in $S_i^*$ is at most $\lVert P_i^*(y,z)\rVert + \varepsilon d(u,w)$. Hence, $\Delta d_i^*(y,z)\leq \varepsilon d(u,w)$, and the potential change of this edge would be
$$\Delta p_i(y,z)=\frac{\Delta d_i^*(y,z)}{d(y,z)}\leq \frac{\varepsilon d(u,w)}{d(y,z)}\leq \varepsilon c^{k(j-j')+1}$$
Again, the number of such $(y,z)$ pairs is bounded by $$C_2 = (1+\varepsilon)^dc^dC_1$$ according to \cref{lem:edge-packing}. Thus, the overall potential change on level $j'$ would be upper bounded by $C_2 \varepsilon c^{k(j-j')+1}$. Summing this up over $j'>j$, the overall potential change on higher level pairs would be at most
$$\Delta \Phi_i < \sum_{j'>j} \varepsilon C_2 c^{k(j-j')+1}=\frac{C_3}{c^k-1}$$
where $C_3=\varepsilon C_2c$.

Now, the number of subsequent edge insertions would also bounded by a constant. Because in order for an inserted pair $(y,z)$ to violate Invariant 1 after the deletion of $(u,w)$, $u$ and $w$ must be within a distance $c(1+\varepsilon)d(u,w)$, otherwise the edge $(u,w)$ would be useless in their shortest-path. Also since they satisfy Invariant 2, we conclude from \cref{lem:edge-packing} that the number of such pairs is bounded by a constant. Denote this bound by $C_4$. Then the potential on higher level pairs from the insertions of $C_4$ pairs on the same level would be at most $C_3C_4/(c^k-1)$.

Overall, the potential increase on higher level pairs from the edge removal process will be $C_5/(c^k-1)$ where $C_5=C_3(C_4+1)$.
\end{proof}

\textbf{Adjusted potential function.} We have one last step before analyzing the potential function after a point insertion and a point deletion. We need to slightly adjust the potential function to take into account future edges that might be added between the existing points because of a new point. As we saw in \cref{sec:sparse}, a new point can have a large degree in $\calS_0$ due to its implicit clusters in multiple levels of the hierarchy. We handled this by assigning these edges to nearby representatives and we proved a constant degree bound on $\calS_1$. But this still would mean adding a point could increase the potential function by $\Omega(\log\Delta)$ since logarithmic number of edges could be added to the sparse spanner. We fix this issue in our potential function by taking into account all the future edges that can be incident to a point. Our adjusted potential function on the whole spanner, denoted by $\Phi^*$, has an extra term compared to the previous potential function $\Phi$,
$$\Phi^* = \Phi + \frac{p_{max}}{2}\cdot\sum_{i=1}^n(D_{max}-\deg_{\calS_1}(v_i))$$
$\deg_{\calS_1}(v_i)$ is the degree of the $i$-th point (in any fixed order, e.g. insertion order) in the sparse bounded degree spanner $\calS_1$, and
$$p_{max}=\max\{1+\varepsilon, C_\phi(\varepsilon-\varepsilon')\}$$
is the maximum potential value a potential pair can have in its own bucket given the fact that it does not violate Invariant 1. Note that the first term is the maximum of the potential of any pair if its edge is present in the bucket and the second term is the maximum potential of the pair if its edge is absent from the bucket and it is not violating Invariant 1. We will later see why the assumption that Invariant 1 holds for such pairs is fine. But this extra term in the potential function will be used to cover the potential $p_i$ of the extra potential pairs added by the new point.

\subsection{Maintaining the light spanner}
We are finally ready to introduce our techniques for maintaining a light spanner under a dynamic point set. For point insertion, we select a subset of edges added in the sparse spanner to be present in the light spanner. We show that the potential increase on $\Phi^*$ after inserting the new point would be bounded by a constant. Then we perform the same analysis for point deletion and we show that the potential increase is bounded by $\mathcal{O}(\log\Delta)$. In the last part of this section we introduce our methods for iteratively improving the weight of the spanner by showing an algorithm that decreases the potential function by a constant value in each iteration. This concludes our results on the recourse for point insertion and point deletion.

\textbf{Point insertion.} Following a point insertion for a point $p$, we insert $p$ into the hierarchy and we update our sparse spanner $\calS_1$. There are at most a constant number of pairs whose representative assignment has changed, we update these pairs in the light spanner as well. Meaning that if they were present in the light spanner, we keep them present but with the new endpoints, and if they were absent, we keep them absent. Besides the re-assignments, there could be some (even more than a constant) edge insertions into the sparse spanner, but the degree bound of $D_{max}$ would still hold on every point. We greedily pick one new edge at a time that its endpoints violate Invariant 1 in the light spanner, and we add that edge to the light spanner. (\cref{alg:light-insert})

\begin{algorithm}[ht]
\caption{Inserting a point to the light spanner.}\label{alg:light-insert}
\begin{algorithmic}[1]
\Procedure{Insert-to-Light-Spanner}{$p$}
\State Insert $p$ into the hierarchy $\T$.%
\State Make the required changes on the sparse bounded degree spanner $\calS_1$.%
\For {any pair $(u,w)$ with updated representative assignment}%
\State Update the endpoints of the edge in the light spanner.%
\EndFor%
\For {any edge $(u,w)$ added to the sparse spanner}%
\If {Invariant 1 is violated for this pair on the light spanner}%
\State Add $(u,w)$ to the light spanner (to its own bucket).%
\EndIf%
\EndFor%
\EndProcedure%
\end{algorithmic}
\end{algorithm}

We now analyze the change in the potential function after performing this function following a point insertion.

\begin{lemma}%
\label{lem:light-ins-potential}%
The procedure $\Call{Insert-to-Light-Spanner}{}$ adds at most a constant amount to $\Phi^*$.
\end{lemma}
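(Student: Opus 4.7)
The plan is to split $\Phi^* = \Phi + \tfrac{p_{max}}{2}\sum_{i=1}^{n}(D_{max}-\deg_{\calS_1}(v_i))$ and analyze the two summands separately. The procedure performs only three categories of edits on $\calS_1$: it introduces the new vertex $p$, it triggers $\mathcal{O}(1)$ representative reassignments (each merely relabels one endpoint of an existing edge and is thus degree-preserving), and it adds at most $A$ genuinely new $\calS_1$-edges, where the sparse-spanner analysis in \cref{sec:sparse} controls $A$ by a (amortized) constant.

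For the deficit term: introducing $p$ at initial degree $0$ contributes $\tfrac{p_{max}}{2}D_{max}$; each of the $A$ new edges raises the total degree by $2$ and shrinks the term by exactly $p_{max}$; the reassignments contribute $0$. Thus the deficit changes by $\tfrac{p_{max}}{2}D_{max}-p_{max}\cdot A$.

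For $\Phi$ I would track three contributions. (i)~Each of the $A$ new potential pairs $(u,w)$ has $p_i(u,w)\leq p_{max}$: if the algorithm placed it in $S_i$, then $p_i(u,w)=(1+\varepsilon)-d_i^*(u,w)/d(u,w)\leq 1+\varepsilon\leq p_{max}$; if it was skipped, then Invariant~1 held at that instant, i.e.\ $d_i^*(u,w)<(1+\varepsilon)d(u,w)$, yielding $p_i(u,w)<C_\phi(\varepsilon-\varepsilon')\leq p_{max}$. Moreover $d_i^*$ only decreases under subsequent edge additions (each addition moves an edge from the $(1+\varepsilon)\lVert e\rVert$ tier to the $\lVert e\rVert$ tier of $S_i^*$), so every skipped pair remains Invariant~1-satisfying through the end of the procedure. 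These $A$ new pairs contribute at most $A\cdot p_{max}$ in total. (ii)~The $\mathcal{O}(1)$ reassignments alter potentials by $\mathcal{O}(1)$. (iii)~Each placement of an edge into a bucket $S_i$ satisfies $d_i^*\geq(1+\varepsilon)d>(1+\varepsilon')d$, so \cref{lem:single-ins-same} rules out any same-level Invariant~2 violation, and \cref{lem:single-ins-high} caps the higher-level potential increase by $C_3/(c^k-1)$; summed over the $A$ additions this is still $\mathcal{O}(1)$.

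Putting everything together,
\[
\Delta\Phi^* \;\leq\; \Big(\tfrac{p_{max}}{2}D_{max}-p_{max}\cdot A\Big)+\Big(A\cdot p_{max}+\mathcal{O}(1)\Big) \;=\; \tfrac{p_{max}}{2}D_{max}+\mathcal{O}(1),
\]
which is a constant. The main obstacle is the clean cancellation of the $\pm p_{max}\cdot A$ terms, which is precisely what the deficit term in $\Phi^*$ was designed to produce: it depends on the observation that representative reassignments are degree-preserving endpoint swaps (so the same $A$ appears in both contributions), and on the per-pair cap $p_{max}$ that fixes the coefficient $\tfrac{p_{max}}{2}$ of the deficit, so that each new edge's $\tfrac{p_{max}}{2}\cdot 2=p_{max}$ decrease of the deficit exactly matches the budget $p_{max}$ available to cover the new pair's potential.
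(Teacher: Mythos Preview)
Your argument is essentially the paper's: the $\mathcal{O}(1)$ representative reassignments are handled directly (the paper views each as an edge removal plus an insertion and invokes \cref{lem:single-ins-same,lem:single-ins-high,lem:single-rem-same,lem:single-rem-high}), while the degree-deficit term releases exactly $p_{max}$ per new $\calS_1$-edge to pay for that pair's own potential $p_i\le p_{max}$, and the new vertex contributes the bounded term $\tfrac{p_{max}}{2}D_{max}$. Your explicit $\pm p_{max}\cdot A$ cancellation is precisely the paper's mechanism, written out more algebraically.
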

\begin{proof}
Note that at most a constant number of edges will go through a representative assignment change. Each representative change can be divided into removing the old pair and adding the new one. Each removal will increase the potential of at most a constant number of pairs on any same or higher level pairs. This would sum up to a constant amount as we saw earlier in \cref{lem:single-rem-same} and \cref{lem:single-rem-high}. Also, inserting the updated pairs would also sum up to a constant amount of increase in the potential function as we saw in \cref{lem:single-ins-same} and \cref{lem:single-ins-high}.

For the edge insertions however, we will get help from the extra term in our potential function. Note that any extra edge that is added between any two points that existed before the new point will increase both of their degrees by 1 and therefore, decrease the term 
$$p_{max}\cdot\sum_{i=1}^n(D_{max}-\deg_{\calS_1}(v_i))$$
by $p_{max}$. On the other hand, the new pair will either be added to the light spanner or will satisfy Invariant 1 if not added. Thus, its potential will be at most $1+\varepsilon$ in the first case, and at most $C_\phi(\varepsilon-\varepsilon')$ in the second case. In any case, the potential of the new pair is not more than $p_{max}$, and hence $\Phi^*$ will not increase due to the addition of the new pair.

Lastly, the new point will introduce a new term
$p_{max}\cdot(D_{max}-\deg_{\calS_1}(v_{n+1}))$
in $\Phi^*$ which would also be bounded by a constant. Overall, the increase in $\Phi^*$ will be bounded by a constant.
\end{proof}

This lemma by itself does not provide an upper bound on the number of inflicted updates. However, later in \cref{sec:maintenance}, when we analyze our maintenance updates, we use this lemma to prove that the amortized number of edge updates would be bounded by a constant.

\textbf{Point deletion.} Following a point deletion, we perform the deletion on the hierarchy and update the sparse spanner accordingly. This would cause at most $\mathcal{O}(\log\Delta)$ potential pairs to be deleted from or inserted into the spanner. The procedure on the light spanner is simple in this case. We add all the inserted pairs to the light spanner, and we remove the removed pairs from the light spanner if they are present.

\begin{algorithm}[ht]
\caption{Deleting a point from the light spanner.}\label{alg:light-del}
\begin{algorithmic}[1]
\Procedure{Delete-from-Light-Spanner}{$p$}
\State Delete $p$ from the hierarchy $\T$.%
\State Make the required changes on the sparse bounded degree spanner $\calS_1$.%
\For {any pair $(u,w)$ removed from the sparse spanner}%
\State Remove $(u,w)$ from the light spanner if present.%
\EndFor%
\For {any pair $(u,w)$ added to the sparse spanner}%
\State Add $(u,w)$ to the light spanner.%
\EndFor%
\For {any pair $(u,w)$ with updated representative assignment}%
\State Update $(u,w)$ in the light spanner as well.%
\EndFor%
\EndProcedure%
\end{algorithmic}
\end{algorithm}

\begin{lemma}%
\label{lem:light-del-potential}%
The procedure $\Call{Delete-from-Light-Spanner}{}$ adds at most $\mathcal{O}(\log\Delta)$ to $\Phi^*$.
\end{lemma}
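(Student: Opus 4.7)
The plan is to mirror the structure of Lemma \ref{lem:light-ins-potential}, with the twist that the sparse-spanner updates now number $\mathcal{O}(\log\Delta)$ rather than $\mathcal{O}(1)$. From Section \ref{sec:sparse} the deletion of $p$ triggers at most $\mathcal{O}(\log\Delta)$ elementary events in $\calS_1$: removals of potential pairs (edges incident to the vanishing implicit clusters of $p$), additions of potential pairs (arising when a chain of children must be replicated to a higher level because no covering cluster exists), and representative reassignments triggered whenever a non-empty block appears or disappears in some chain. I would bound the contribution of each such event to $\Delta\Phi^*$ by a constant and sum over the $\mathcal{O}(\log\Delta)$ events.

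For the $\Phi$ portion of $\Phi^*$, the crucial observation is that $d_i^*$ is defined using only strictly lower-level edges, so inserting or deleting a same-level edge does not change $d_i^*$ for any other same-level pair. Hence only higher-level pairs can see their $d_i^*$ shift, and their aggregate potential change after one insertion or removal is a constant by Lemmas \ref{lem:single-ins-high} and \ref{lem:single-rem-high}. The touched pair itself enters or leaves $\sum p_i(u,v)$, and its value is at most $p_{max}$, another constant. A representative update is just a remove followed by an add, so it is still $\mathcal{O}(1)$. Summing over the $\mathcal{O}(\log\Delta)$ events gives $\Delta\Phi=\mathcal{O}(\log\Delta)$.

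For the extra term $\tfrac{p_{max}}{2}\sum_{i=1}^{n}(D_{max}-\deg_{\calS_1}(v_i))$, each edge removal from $\calS_1$ drops two degrees by one, pushing the sum up by $2$ and the extra term up by exactly $p_{max}$; each edge addition symmetrically drops it by $p_{max}$. Over $\mathcal{O}(\log\Delta)$ such events this yields an $\mathcal{O}(p_{max}\log\Delta)=\mathcal{O}(\log\Delta)$ contribution. The term $(D_{max}-\deg_{\calS_1}(p))\cdot p_{max}/2$ associated with the deleted point $p$ simply disappears from the summation, which is a non-positive change and hence harmless. Adding this to the $\Phi$ bound completes the $\mathcal{O}(\log\Delta)$ estimate.

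The main subtlety, and the step I expect to require the most care, is that \textsc{Delete-from-Light-Spanner} does not run the full edge-removal process of Section \ref{sec:potential-function}, which would also insert same-level edges to repair any freshly created Invariant~1 violations. I would sidestep this by invoking only the deletion-only half of the proof of Lemma \ref{lem:single-rem-high}, namely the bound on higher-level $\Phi$ increase caused by the removal itself, which does not rely on the subsequent insertions. Any same-level Invariant~1 violations introduced by the pure removal are deferred to the maintenance updates of Section \ref{sec:maintenance} and charged against the same $\mathcal{O}(\log\Delta)$ recourse budget analyzed there, leaving the present lemma as a clean bookkeeping statement about $\Phi^*$.
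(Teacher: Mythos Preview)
Your approach is correct and follows the same overall line as the paper: count $\mathcal{O}(\log\Delta)$ elementary edge events triggered by the deletion and argue each contributes only a constant to $\Phi^*$. In fact your argument is considerably more careful than the paper's own proof, which simply asserts that ``each change would cause $\Phi^*$ to increase by at most $p_{max}$'' without separating the pair's own potential from the induced changes on higher-level pairs or from the degree term. Your explicit invocation of Lemmas~\ref{lem:single-ins-high} and~\ref{lem:single-rem-high} for the higher-level effect, your observation that same-level $d_i^*$ values are untouched by the definition of extended paths, and your separate accounting for the $\tfrac{p_{max}}{2}\sum(D_{max}-\deg_{\calS_1}(v_i))$ term are all details the paper elides; the final paragraph about deferring Invariant~1 repairs to the maintenance phase is likewise implicit in the paper but worth stating.
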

\begin{proof}
The number of edges updated on every level of hierarchy after a point removal is bounded by a constant. Therefore, the total number of changes would be bounded by $\mathcal{O}(\log\Delta)$. Each change would cause $\Phi^*$ to increase by at most $p_{max}$. Thus, the total increase is bounded by $\mathcal{O}(\log\Delta)$.
\end{proof}

\subsection{Maintenance updates}%
\label{sec:maintenance}%

As we saw earlier in this section, following a point insertion and removal many edge updates happen on the light spanner, and we did not check for the invariants to hold after these changes. Maintaining Invariant 1 and Invariant 2 is crucial for the quality of our spanner. Here, we show how we can maintain both invariants following a point insertion and deletion. We also complete our amortized bounds on the number of updates required to make the spanner consistent with the two invariants.

Our maintenance updates are of two different types, each designed to fix the violations of one invariant. Whenever a violation of Invariant 1 occurs for a potential pair $(v,w)$, we fix the violation by simply adding the edge between $v$ and $w$ to its corresponding $S_i$. This fixes the violation for this pair, and all pairs of descendants of the two clusters that $v$ and $w$ represent. We will show that this change will decrease the value of the potential function by a constant amount.

Fixing a violation of Invariant 2 on the other hand is not that simple. Removing the edge between $v$ and $w$ might cause multiple violations of Invariant 1 on the same level. As we discussed before, we address this issue by fixing the same-level violations of Invariant 1 first through adding edges between $v$'s neighborhood and $w$'s neighborhood. Then we show that the removal of $(v,w)$ together with these additions would lower the value of the potential function by a constant amount.

Our maintenance approach is simple, as long as there exists a potential pair on any $S_i$ that violates either of the two invariants, we perform the corresponding procedure to enforce that invariant for that pair. The fact that the potential function decreases by a constant amount after each fix is the key to our amortized analysis on the number of maintenance updates to reach a spanner with bounded degree and bounded lightness.

\textbf{Fixing a violation of Invariant 1.} In our first lemma in this section, we show that fixing a violation of Invariant 1 in the way that we mentioned above, would decrease the value of the potential function on each $S_i$.

\begin{lemma}%
\label{lem:fix-inv1}%
Let $(v,w)$ be a potential pair with $\ind(v,w)=i$ that violates Invariant 1, i.e. $d_i^*(v,w)/d(v,w)> 1+\varepsilon$. Also, assume that
$$k\geq \log_c\left(1+\frac{C_3}{(C_\phi-1)(\varepsilon-\varepsilon')}\right)$$
Then adding the edge $(v,w)$ to $S_i$ decreases the overall potential $\Phi_i$ of $S_i$ by at least $(\varepsilon-\varepsilon')$.
\end{lemma}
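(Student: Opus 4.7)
The plan is to decompose the total change in $\Phi_i$ into three contributions: the change at the pair $(v,w)$ itself, the change on same-level and lower-level pairs, and the change on higher-level pairs. The key observation throughout is that $d_i^*$ is defined using only edges of strictly smaller size, so inserting the edge $(v,w)$ into $S_i$ cannot decrease $d_i^*(y,z)$ for any pair $(y,z)$ of the same or lower level as $(v,w)$ — adding $(v,w)$ is simply invisible to those extended-path computations.

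First I would handle the pair $(v,w)$ itself. Because $d_i^*(v,w)$ is unchanged by the insertion, the only thing that changes in $p_i(v,w)$ is which branch of the piecewise definition applies. Writing $x = d_i^*(v,w)/d(v,w)$, the before value is $C_\phi(x - (1+\varepsilon'))$ and the after value is $(1+\varepsilon) - x$, so
\[
\Delta p_i(v,w) = (1+\varepsilon) + C_\phi(1+\varepsilon') - (1+C_\phi)\,x.
\]
Invariant~1 is violated, meaning $x > 1+\varepsilon$, and plugging this in collapses the right-hand side to $-C_\phi(\varepsilon-\varepsilon')$ as a strict upper bound. So the pair itself contributes a drop exceeding $C_\phi(\varepsilon-\varepsilon')$ to $\Phi_i$.

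Next I would argue that same-level and lower-level pairs contribute zero. Any such pair $(y,z)$ has $\operatorname{size}(y,z) \leq \operatorname{size}(v,w)$, so by definition its extended paths in $S_i^*$ are forbidden from using the edge $(v,w)$; consequently $d_i^*(y,z)$ — and therefore $p_i(y,z)$ — is unaffected by the insertion.

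Finally I would invoke \cref{lem:single-ins-high} verbatim: the insertion of the potential pair $(v,w)$ into $S_i$ raises the total potential on higher-level pairs by at most $C_3/(c^k-1)$. Combining the three contributions,
\[
\Delta \Phi_i < -C_\phi(\varepsilon-\varepsilon') + \frac{C_3}{c^k - 1}.
\]
The assumed lower bound on $k$ rearranges to $C_3/(c^k-1) \leq (C_\phi - 1)(\varepsilon - \varepsilon')$, which makes $\Delta \Phi_i < -(\varepsilon - \varepsilon')$, as required. The only genuine obstacle is being careful with the piecewise definition of $p_i$ at the switch point and confirming that $d_i^*(v,w)$ really is unchanged by the insertion; everything else is a one-line algebraic substitution together with a citation.
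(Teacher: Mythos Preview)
Your proof is correct and follows essentially the same route as the paper: the same three-way decomposition (the pair itself, same/lower-level pairs unaffected by definition of $d_i^*$, higher-level pairs bounded via \cref{lem:single-ins-high}), the same computation showing $\Delta p_i(v,w) < -C_\phi(\varepsilon-\varepsilon')$ from the branch switch, and the same final rearrangement of the hypothesis on $k$. The only cosmetic difference is that you write the algebra for $\Delta p_i(v,w)$ directly in terms of $x=d_i^*(v,w)/d(v,w)$ rather than factoring it as $(\varepsilon-\varepsilon')+(C_\phi+1)(1+\varepsilon'-x)$ as the paper does.
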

\begin{proof}
Note that adding $(v,w)$ would have no effect on the potential of the lower level or same level potential pairs, due to the definition of $d_i^*$. We know from \cref{lem:single-ins-high} that adding $(v,w)$ to $S_i$ would increase the potential on higher level pairs by at most $C_3/(c^k-1)$. Also, the potential of the pair itself before the addition is
$$p_i(v,w)=C_\phi\cdot\left(\frac{d_i^*(v,w)}{d(v,w)}-(1+\varepsilon')\right)$$
On the other hand, after the addition,
$$p_i(v,w)=(1+\varepsilon)-\frac{d_i^*(v,w)}{d(v,w)}$$
Therefore,
$$\Delta p_i(v,w)=(\varepsilon-\varepsilon')+(C_\phi+1)\left(1+\varepsilon'-\frac{d_i^*(v,w)}{d(v,w)}\right)$$
We know by the assumption that the stretch of the shortest extended path between $v$ and $w$ would be more than $1+\varepsilon$, since $(v,w)$ is violating Invariant 1. Therefore,
$$1+\varepsilon'-\frac{d_i^*(v,w)}{d(v,w)}< -(\varepsilon-\varepsilon')$$
Thus,
$$\Delta p_i(v,w)<(\varepsilon-\varepsilon')-(C_\phi+1)(\varepsilon-\varepsilon')=-C_\phi(\varepsilon-\varepsilon')$$
According to this and what we mentioned earlier in the proof,
$$\Delta\Phi_i \leq -C_\phi(\varepsilon-\varepsilon')+\frac{C_3}{c^k-1}$$
and if
$$k\geq \log_c\left(1+\frac{C_3}{(C_\phi-1)(\varepsilon-\varepsilon')}\right)$$
then $\Delta\Phi_i\leq-(\varepsilon-\varepsilon')$, which is a negative constant.
\end{proof}

\textbf{Fixing a violation of Invariant 2.} Next, we consider the second type of maintenance updates, which is to fix the violations of Invariant 2. Whenever a pair $(v,w)$ that violates Invariant 2 is found, the first step is to remove the corresponding edge from its subset $S_i$. Afterwards, we address the same-level violations of Invariant 1 by greedily adding a pair that violates Invariant 1, until none is left. This is the same as performing the edge removal process on the violating pair.

\begin{lemma}%
\label{lem:fix-inv2}%
Let $(v,w)\in S_i$ be an edge that violates Invariant 2, i.e. $d_i^*(v,w)/d(v,w)\leq 1+\varepsilon'$. Also assume that
$$k\geq \log_c\left(1+\frac{2C_5}{\varepsilon-\varepsilon'}\right)$$
Then performing the edge removal process on $(v,w)$ decreases the overall potential $\Phi_i$ of $S_i$ by at least $(\varepsilon-\varepsilon')$.
\end{lemma}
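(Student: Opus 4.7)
The plan is to decompose the change in $\Phi_i$ induced by the edge removal process into three disjoint pieces: the contribution of $(v,w)$ itself, the contributions of the same-level pairs inserted to repair Invariant~1 violations, and the contribution of the higher-level potential pairs. The crucial structural remark is that extended paths in the definition of $d_i^*$ use only edges of \emph{strictly} lower level, so neither the removal of $(v,w)$ nor any of the subsequent same-level insertions perturbs the value of $d_i^*(y,z)$ for any other same-level pair $(y,z)$.

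For the $(v,w)$ piece, its potential before the removal is $(1+\varepsilon)-d_i^*(v,w)/d(v,w)$ and afterward (it is still a potential pair with index $i$, just no longer in $S_i$) is $C_\phi\bigl(d_i^*(v,w)/d(v,w)-(1+\varepsilon')\bigr)$, with $d_i^*(v,w)$ unchanged by the structural remark. Writing $\alpha=d_i^*(v,w)/d(v,w)\leq 1+\varepsilon'$ and subtracting, the difference simplifies to $(C_\phi+1)\alpha-C_\phi(1+\varepsilon')-(1+\varepsilon)$, a monotone increasing function of $\alpha$ that is maximized at $\alpha=1+\varepsilon'$ and there equals $-(\varepsilon-\varepsilon')$. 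For each same-level pair $(y_j,z_j)$ that is subsequently inserted to fix an Invariant~1 violation, the mirror computation with $d_i^*(y_j,z_j)/d(y_j,z_j)\geq 1+\varepsilon$ gives $\Delta p_i(y_j,z_j)\leq -C_\phi(\varepsilon-\varepsilon')$, an additional strictly negative contribution.

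Finally I would invoke \cref{lem:single-rem-high} to bound the total increase on higher-level pairs across the whole process (removal plus all subsequent insertions) by $C_5/(c^k-1)$, and rearrange the hypothesis on $k$ as $C_5/(c^k-1)\leq (\varepsilon-\varepsilon')/2$. Summing the three pieces, the net change in $\Phi_i$ is at most $-(\varepsilon-\varepsilon')-N\cdot C_\phi(\varepsilon-\varepsilon')+(\varepsilon-\varepsilon')/2$, where $N$ is the number of subsequent fixes; tuning the earlier constants $C_\phi$ and $k$ with enough slack then yields the claimed decrease of at least $(\varepsilon-\varepsilon')$. The main obstacle is verifying the structural remark about $d_i^*$ carefully so that the same-level bookkeeping truly reduces to just the $(v,w)$ and $(y_j,z_j)$ terms, rather than silently picking up contributions from other same-level pairs whose $d_i^*$ would otherwise seem to be at risk of changing.
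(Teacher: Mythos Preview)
Your proposal is correct and follows essentially the same approach as the paper: use the definition of $d_i^*$ to rule out changes on lower-level and other same-level pairs, compute $\Delta p_i(v,w)\leq -(\varepsilon-\varepsilon')$ from the status change of $(v,w)$, and cap the higher-level contribution by $C_5/(c^k-1)\leq(\varepsilon-\varepsilon')/2$ via \cref{lem:single-rem-high}. You are in fact slightly more careful than the paper, which does not explicitly account for the potential change of the subsequently inserted pairs $(y_j,z_j)$; your observation that each such change is at most $-C_\phi(\varepsilon-\varepsilon')<0$ only strengthens the bound. Note that with the stated hypothesis on $k$ both your computation and the paper's yield $\Delta\Phi_i\leq -(\varepsilon-\varepsilon')/2$ (this is exactly what the paper's proof concludes), so your closing remark about ``tuning $C_\phi$ and $k$ with enough slack'' to reach the full $-(\varepsilon-\varepsilon')$ is unnecessary and slightly misleading, since $k$ is fixed by hypothesis; the $-(\varepsilon-\varepsilon')/2$ bound is what is actually used downstream in \cref{thm:main}.
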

\begin{proof}
Since all the additions and removals in the edge removal process are happening on the same level and also due to the definition of $d_i^*$, there would be no potential change on any of the same or lower level pairs. We know from \cref{lem:single-rem-high} that deleting $(v,w)$ from $S_i$ would increase the potential on higher level pairs by at most $C_5/(c^k-1)$. The potential of the pair itself before the deletion is
$$p_i(v,w)=(1+\varepsilon)-\frac{d_i^*(v,w)}{d(v,w)}$$
After the deletion,
$$p_i(v,w)=C_\phi\cdot\left(\frac{d_i^*(v,w)}{d(v,w)}-(1+\varepsilon')\right)$$
Therefore,
$$\Delta p_i(v,w)=-(\varepsilon-\varepsilon')-(C_\phi+1)\left(1+\varepsilon'-\frac{d_i^*(v,w)}{d(v,w)}\right)$$
We know by the assumption that the stretch of the shortest extended path between $v$ and $w$ would be less than $1+\varepsilon'$, since $(v,w)$ is violating Invariant 2. Therefore,
$$1+\varepsilon'-\frac{d_i^*(v,w)}{d(v,w)}>0$$
Thus,
$$\Delta p_i(v,w)<(\varepsilon-\varepsilon')$$
According to this and what we mentioned earlier in the proof,
$$\Delta\Phi_i \leq -(\varepsilon-\varepsilon')+\frac{C_5}{c^k-1}$$
and if
$$k\geq \log_c\left(1+\frac{2C_5}{\varepsilon-\varepsilon'}\right)$$
then $\Delta\Phi_i\leq-(\varepsilon-\varepsilon')/2$, which is a negative constant.
\end{proof}

\textbf{Bounding the number of updates.} Now that we introduced our maintenance updates and we analyzed the change in the potential functions after each of these updates, we can finally prove our amortized bounds. We prove that the amortized number of edge updates in our algorithm after a point insertion is $\mathcal{O}(1)$, while the amortized number of edge updates after a point deletion is $\mathcal{O}(\log \Delta)$.

\begin{theorem}%
\label{thm:main}%
Our fully-dynamic spanner construction in $d$-dimensional Euclidean spaces has a stretch-factor of $1+\varepsilon$ and a lightness that is bounded by a constant. Furthermore, this construction performs an amortized $\mathcal{O}(1)$ edge updates following a point insertion, and an amortized $\mathcal{O}(\log\Delta)$ edge updates following a point deletion.
\end{theorem}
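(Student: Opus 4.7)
The plan is to assemble \cref{thm:main} from the results already established in the paper. The stretch and lightness guarantees come from the invariants holding at the end of every operation; the amortized recourse bounds come from a standard potential argument using $\Phi^*$, in which every maintenance edge update is charged against a fixed positive drop in the potential.

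For the qualitative guarantees, note that the maintenance procedure of \cref{sec:maintenance} is applied repeatedly until no potential pair violates either invariant, so after every operation Invariant 1 holds for all pairs and Invariant 2 holds for all buckets. Then \cref{lem:inv1} converts Invariant 1 on each bucket into the $(1+\varepsilon)$-spanner property for $S=\bigcup_i S_i$. Invariant 2 is the leapfrog condition of Das and Narasimhan~\cite{das1995new}, from which a constant lightness of $S_i$ relative to the MST of its endpoints follows; since any subset's Euclidean MST is within a constant factor of the MST of $V$ in constant-dimensional Euclidean space, and since there are only $k=\lceil \log_c C\rceil=\mathcal{O}(1)$ buckets, summing gives a constant lightness bound for $S$.

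For the recourse, I would split each operation into its algorithmic phase (the edge changes inside \textsc{Insert-to-Light-Spanner} or \textsc{Delete-from-Light-Spanner}) and the subsequent cascade of maintenance updates. The sparse-spanner analysis of \cref{sec:sparse} bounds the number of algorithmic edge changes by $\mathcal{O}(1)$ amortized for insertion and $\mathcal{O}(\log\Delta)$ for deletion, and \cref{lem:light-ins-potential,lem:light-del-potential} bound the increase of $\Phi^*$ during that phase by $\mathcal{O}(1)$ and $\mathcal{O}(\log\Delta)$, respectively. Choosing the bucket count $k$ at least the larger of the two thresholds in \cref{lem:fix-inv1,lem:fix-inv2} ensures that every maintenance update performs $\mathcal{O}(1)$ edge changes (one for Invariant 1, one plus constantly many auxiliary same-level insertions from the edge removal process for Invariant 2) and reduces $\Phi^*$ by at least $\kappa=(\varepsilon-\varepsilon')/2$; the degree-slack term of $\Phi^*$ is untouched by maintenance steps. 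Using $\Phi^*/\kappa$ as the amortization potential, each maintenance step has amortized cost zero, so the amortized number of edge updates per operation equals its algorithmic count plus $\mathcal{O}(1/\kappa)$ times the algorithmic increase in $\Phi^*$, giving $\mathcal{O}(1)$ per insertion and $\mathcal{O}(\log\Delta)$ per deletion. The additive $\mathcal{O}(n)$ from the initial-versus-final potential is absorbed by the amortization because $n$ is at most the number of prior insertions.

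The part I expect to need the most care is the scheduling of the maintenance updates, because \cref{lem:fix-inv1,lem:fix-inv2} implicitly assume that the pair being fixed is the only same-level disturbance present, so that the bounds of \cref{lem:single-ins-high,lem:single-rem-high} on higher-level pairs apply cleanly. I would process the buckets from the lowest level upward, and inside each level first remove every edge that violates Invariant 2 (via the edge removal process, which by \cref{lem:single-rem-same} introduces no new violations on the same or lower levels) and then restore Invariant 1 by adding the demanded pairs (whose additions by \cref{lem:single-ins-same} create no new same-level Invariant 2 violations under the chosen $c,\lambda,\varepsilon'$). With this ordering the preconditions of \cref{lem:fix-inv1,lem:fix-inv2} are met at each maintenance step, and the cascade terminates because $\Phi^*$ is bounded below while every step lowers it by at least $\kappa$.
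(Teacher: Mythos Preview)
Your proposal is correct and follows essentially the same route as the paper: the qualitative guarantees come from the invariants via \cref{lem:inv1} and the leapfrog property, and the amortized recourse is obtained by charging each maintenance step against the constant drop in $\Phi^*$ guaranteed by \cref{lem:fix-inv1,lem:fix-inv2}, with the increase in $\Phi^*$ per operation bounded by \cref{lem:light-ins-potential,lem:light-del-potential}. Your final paragraph on scheduling is more cautious than the paper, which simply fixes violations in arbitrary order; the preconditions of \cref{lem:fix-inv1,lem:fix-inv2} are just that the chosen pair violates the relevant invariant and that $k$ is large enough, and the potential-drop computations there go through regardless of what other violations are present, so the bottom-up ordering is not needed (though it does no harm).
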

\begin{proof}
The stretch factor and the lightness immediately follow from the fact that our spanner always satisfies the two invariants, and according to \cref{lem:inv1} and the leapfrog property, that would be enough for a $1+\varepsilon$ stretch factor and constant lightness.

In order to prove the amortized bounds on the number of edge updates after each operation, we recall that by \cref{lem:light-ins-potential}, the potential change $\Delta\Phi^*$ after a point insertion is bounded by a constant, and by \cref{lem:light-del-potential}, the potential change after a point deletion is bounded by $\mathcal{O}(\log \Delta)$. On the other hand, by \cref{lem:fix-inv1} and \cref{lem:fix-inv2}, each maintenance update reduces the potential $\Phi^*$ by at least $(\varepsilon-\varepsilon')/2$, since the impacted $\Phi_i$ reduces after the maintenance update, $\Phi_j$ for $j\neq i$ will remain unchanged, and the extra term $\frac{p_{max}}{2}\cdot\sum_{i=1}^n(D_{max}-\deg_{\calS_1}(v_i))$ will also remain unchanged since the sparse spanner is not affected by the maintenance updates. Therefore, the amortized number of maintenance updates required after each point insertion is $\mathcal{O}(1)$ while this number after a point deletion is $\mathcal{O}(\log \Delta)$. Also, the number of edge updates before the maintenance updates would be bounded by the same amortized bounds. Thus, we can finally conclude that the amortized number of edge updates following a point insertion is $\mathcal{O}(1)$, while for a point deletion it is $\mathcal{O}(\log\Delta)$.
\end{proof}

\section{Conclusions}

In this paper, we presented the first fully-dynamic lightweight construction for $(1+\varepsilon)$-spanners in the $d$-dimensional Euclidean space. In our construction, the amortized number of edge updates following a points insertion is bounded by a constant, and the amortized number of edge updates following a point deletion is bounded by $\mathcal{O}(\log\Delta)$. To achieve this, we defined a set of maintenance updates that could reduce the weight of an existing spanner iteratively, leading to a bounded lightness spanner. We also defined a potential function that could be used to provide an amortized bound on the number of such updates. This framework can be used to find lightweight Euclidean spanners under circumstances other than the fully-dynamic setting, e.g. semi-dynamic or online setting with recourse. It would be interesting to explore other applications of this framework. Since our construction, like the celebrated greedy spanner construction, takes advantage of shortest path queries, it is not necessarily efficient in terms of the running time, and it is suitable when the problem-dependent update cost for a single edge is high. Although we did not focus on optimizing the running time, it would be interesting to look at our construction from that perspective, and find ways to improve its efficiency. We also leave as an open problem whether the amortized bound on the number of edge updates following a point deletion can be improved to $\mathcal{O}(1)$.

\bibliographystyle{plainurl}
\bibliography{main}

\end{document}